\newcommand{\blind}{0}
\newcommand{\suit}[1]{\left(#1\right)}
\newcommand{\abs}[1]{\left\vert#1\right\vert}
\newcommand{\set}[1]{\left\{#1\right\}}
\newcommand{\msuit}[1]{\left[ #1 \right]}
\newtheorem{theorem}{Theorem}
\newtheorem{prop}{Proposition}
\newtheorem{corollary}{Corollary}
\newtheorem{remark}{{Remark}}
\begin{document}

\def\spacingset#1{\renewcommand{\baselinestretch}%
{#1}\small\normalsize} \spacingset{1}


\if0\blind
{
  \title{\bf Distributed inference for tail risks}
  \author{Liujun Chen
  \\
  International Institute of  Finance, School of Management, \\
   University of Science and Technology of China\\ 
    Deyuan Li \\
    School of Management,     Fudan University \\
    and  \\
    Chen Zhou \\
    Econometric Institute, Erasmus University Rotterdam}
  \maketitle
} \fi

\if1\blind
{
  \bigskip
  \bigskip
  \bigskip
  \begin{center}
    {\LARGE\bf Title}
\end{center}
  \medskip
} \fi

\bigskip
\begin{abstract}
For measuring tail risk with scarce extreme events, extreme value analysis is often invoked as the statistical tool to extrapolate to the tail of a distribution. The presence of large datasets benefits tail risk analysis by providing more observations for conducting extreme value analysis. However, large datasets can be stored distributedly preventing the possibility of directly analyzing them. 
 In this paper, we introduce a comprehensive set of tools for examining the asymptotic behavior of tail empirical and quantile processes in the setting where data is distributed across multiple sources, for instance, when data are stored on multiple machines. Utilizing these tools, one can establish the oracle property for most distributed estimators in   extreme value statistics in a straightforward way. The main theoretical challenge arises when the number of machines diverges to infinity. The number of machines  resembles the role of dimensionality in high dimensional statistics. 
  We provide various examples to demonstrate the practicality and value of our proposed toolkit.
\end{abstract}

\noindent%
{\it Keywords:}  Oracle property, Tail empirical process, Tail quantile process, KMT inequality
\vfill

\newpage
\spacingset{1.8} 



\section{Introduction}
Financial  risk management requires risk forecasting for rare but high-impact events,  typically referred to as extreme events. Extreme value analysis, statistical methods for analyzing the tail of a distribution, is recognized as a useful tool for modeling and analyzing such extremes. In this paper, we consider tail risk analysis using a large dataset that is distributedly stored at various locations.

While the availability of large datasets in general benefits statistical analysis, such as extreme value analysis, it also 
presents at least three practical challenges to implementing conventional statistical  procedures. Firstly, a combined dataset might not be available to one end user due to privacy concerns.  For example, consider  analyzing insurance claims from various insurance companies. Since insurance firms are
contracted  for protecting  privacy of their customers,
 it is impossible to combine  all claims from different insurers into one massive dataset.
 Secondly,  the computation cost to analyze a massive dataset can be expensive when implementing statistical procedures involving 
an optimization algorithm, such as maximum likelihood or loss minimization. Thirdly, storage constraints can arise when dealing with massive datasets,  for instance, when the size of a dataset exceeds a computer's memory. Another example is to analyze online stream data, where data become available in a sequential manner \citep{gama2013evaluating}.


One solution to overcome these challenges is to handle the massive datasets in batches, sometimes referred to as ``distributedly stored''.
Divide and Conquer (DC) algorithms are often invoked 
 when data are distributedly stored in multiple machines. One  first  estimates a desired quantity or parameter based on part of the data  stored on each machine  and then combines the results obtained from all machines to calculate an aggregated estimate, often by a simple average.   DC algorithms have at least three advantages. Firstly, DC algorithms help to preserve privacy. For example, insurance firms can share some statistical results provided that that other companies cannot infer client level data from the
 shared results.
  Moreover, DC algorithms can significantly improve  computational efficiency by utilizing  parallel computing. Lastly, DC algorithms can  overcome the challenge of storage constraint by analyzing the dataset in batches.

  Theoretically a DC algorithm can be applied to a given statistical procedure only if it possesses the so-called  oracle property: the aggregated estimator by averaging  achieves the same statistical efficiency as the imaginary estimator using all observations. The latter  is often referred to as  the {\it oracle estimator}. The validity of applying a  DC algorithm is not obvious for many statistical procedures, and often requires additional conditions,  see \cite{fan2019distributed} for principal component analysis, \cite{volgushev2019distributed} for quantile regression and \cite{li2013statistical} for kernel density estimation, among others.

  The validity of a DC algorithm to extreme value analysis may fail.
  For example,
  considering a distribution with a finite endpoint, a natural estimator for the endpoint is the sample
  maxima. If an oracle sample maxima cannot be obtained, one may consider collecting the maxima
  from  data stored in each machine. Clearly, to obtain the oracle estimator,
  one needs to take the maximum of the machine-wise maxima instead of  taking average. Therefore, in this specific example, the
  standard DC algorithm based on averaging may lead to an estimator that fails  the oracle
  property. Since estimators in extreme value analysis are often based on high order statistics, the oracle property of the  DC algorithm for such estimators are yet to be established. The goal of this paper is to fill in this gap.

  In this paper,  we provide general tools to establish the oracle property for distributed estimators in extreme value statistics.
\cite{chen2021distributed} adapts a particular extreme value estimator,  the Hill estimator,  to the DC algorithm and proposes the distributed Hill estimator  to estimate the extreme value index $\gamma$  for the case $\gamma>0$. They study the asymptotic behavior of the  distributed Hill estimator and show
sufficient, sometimes also necessary,  conditions under which the  distributed Hill estimator possesses the oracle property.  The proof therein relies on the specific construction of the Hill estimator, and  cannot be  generalized to validate the oracle property of other  estimators for the extreme value index, let alone that of other estimators for practically relevant quantities such as high quantile, tail probability and endpoint.   By contrast, using the tools in this paper, the oracle property for most   estimators  extreme value statistics based on the peak-over-threshold (POT) approach can be established in a straightforward way.

In classical extreme value statistics, two key tools for establishing
asymptotic theories  are the tail empirical process and the tail quantile process.  
Let $l = l(N)$ be an intermediate sequence such that as $N\to\infty$, $l\to\infty, l/N\to 0$. The tail empirical process is defined as 
$$
Y_{N,l}\suit{x} = \frac{N}{l}\bar{F}_N\set{a_0\suit{\frac{N}{l}}x+b_0\suit{\frac{N}{l}}}, \quad x\in \mathbb{R},
$$
where  $\bar{F}_N:=1-F_N$ and $F_N$ denotes the empirical cumulative distribution function 
$F_N(x) := N^{-1}\sum_{i=1}^N I\suit{X_i\le x}$. Here $a_0$ and $b_0$ are suitable versions of $a$ and $b$, respectively. \cite{drees2006approximations} shows a weighted approximation of the tail empirical process $Y_{N,l}(x)$ (see Section \ref{sec:empirical} below)  under some mild conditions. The approximation of the tail empirical process is a useful tool in a wider context. For example,  \cite{drees2006approximations} proposes a test for the   extreme value condition,  \citet[Example 5.1.5]{de2006extreme} establishes the asymptotic normality of  the Hill estimator, both by using this result.

Analogous to the tail empirical process, 
\cite{drees1998smooth} shows a weighted approximation of the tail quantile process (see Section \ref{sec:quantile} below). The tail quantile process  is defined as
$$
Q_{N,l}(s) = \frac{X_{N-[ls],N} - b_0\suit{\frac{N}{l}}}{a_0\suit{\frac{N}{l}}} , \quad s\in [0,1],
$$
where  $X_{N,N}\ge \dots \ge X_{1,N}$ are the 
  order statistics of the  sample $\set{X_1,\dots,X_N}$. Here and thereafter, we use $[x]$ to denote the largest integer less than or equal to $x$. Note that the POT approach in extreme value statistics often uses high order statistics $X_{N,N}, \dots, X_{N-l,N}$. Consequently, 
compared to the tail empirical process, the tail quantile process is  more straightforward for proving asymptotic theory for estimators  in extreme value statistics based on the POT approach.  By writting such estimators  as a  functional of $Q_{N,l}(s)$ and using the weighted approximation of the tail quantile process,   one can  derive their asymptotic behavior.  Examples of  estimators for the extreme value index based on the POT approach are the probability weighted moment estimator \citep{hosking1987parameter}, the maximum likelihood estimator \citep{drees2004maximum} and the Pickands estimator \citep{pickands1975statistical}.  In addition, the asymptotic behavior for the estimators of high quantile, tail probability and endpoint can be  derived from the approximation of the tail quantile process as well, see e.g. Chapter 4 in \cite{de2006extreme}.

In this paper, we  establish weighted approximations of tail empirical processes and tail quantile processes  for  the  distributed subsamples in a joint manner,  with linking these approximations to that for the oracle sample. Such results lead to the oracle properties of many distributed estimators in extreme value statistics based on
the POT approach, with some straightforward proofs.
Mathematically, we show a stronger result than the classical oracle property:
the  difference between the distributed estimator  and the oracle estimator diminishes faster than the speed of convergence of the oracle estimator.   Note that \cite{chen2021distributed} only shows that the limiting distribution of the distributed Hill estimator coincides with that of the oracle estimator, whereas asymptotic behavior of the difference between the two estimators cannot be derived using the method therein. \cite{daouia2021optimal} achieves such result  for the   Hill estimator.
However,   the proof in \cite{daouia2021optimal} cannot   be generalized  to  other estimators in extreme value statistics since it relies on the specific structure of the Hill estimator.

The main challenge for handling  tail empirical process arises when the number of machines diverges to infinity. The number of  machines resembles the role of dimensionality in high dimensional statistics.   
Observing that with equal subsample sizes across different machines, the  tail empirical process for the oracle sample is the average of the tail empirical processes based on the distributed subsamples, it  seems trivial that they can be approximated by the same asymptotic limits. However, to aggregate the tail empirical processes in different machines, we need to make sure that
   the  approximation errors  in different machines  are uniformly negligible.   We achieve this mathematically difficult result by invoking  Koml\'os-Major-Tusn\'ady type inequalities (see e.g. \cite{komlos1975approximation}). 
    Linking the weighted approximation of the tail empirical process based on the oracle sample to those of the tail empirical processes on each machine is an important intermediate step towards establishing similar links between the corresponding tail quantile processes.

By contrast, when handling tail quantile processes, we cannot follow similar steps as  for tail empirical processes. The main difference is that the average of the tail quantile processes based on distributed subsamples in different machines is not equal to the tail quantile process based on the oracle sample. Linking the approximations of the tail quantile processes based on the distributed subsamples to that based on the oracle sample poses an additional layer of technical difficulty, which we will handle in Section  \ref{sec:quantile}.

The rest of this paper is organized as follows. Section \ref{sec:empirical} shows the weighted approximations  of the  tail empirical processes based on the distributed subsamples in a joint manner and links that to the weighted approximation of tail empirical process based on the oracle sample. Section \ref{sec:quantile} shows the analogous result for  the weighted approximations of the tail quantile processes. We provide various examples in Section \ref{Section:example} to show how these  tools can be used to prove the oracle property of  extreme value estimators such as the estimators of extreme value index, high quantile, tail probability and endpoint.  Section \ref{sec:heter}  extends the theoretical  results to the case of  heterogeneous subsample sizes.  A real data application is given in Section \ref{sec:realdata}. A concluding remark is made in Section \ref{sec:discuss}.
The technical proofs are  deferred to  the Supplementary Material, along with a simulation study showing the  performance of the distributed estimators for the extreme value index and the high quantile.

Throughout the paper, $a(t)\sim b(t)$  means that $a(t)/b(t) \to 1$ as $t\to\infty$; $a(t) \asymp b(t)$ means that both $\abs{a(t)/b(t)}$ and $\abs{b(t)/a(t)}$ are $O(1)$ as $t\to\infty$.

\section{Distributed Tail Empirical Process} \label{sec:empirical}

Let $X_1,\dots,X_N$ be independently and identically distributed (i.i.d.) random variables with distribution function $F$, which is in the maximum domain of attraction of an extreme value distribution $G_{\gamma}$ with index $\gamma\in \mathbb{R}$, i.e. there exist a positive function $a$ and a real function $b$ such that,
$$
    \lim_{N\to\infty} F^N\set{a(N) x +b(N)}= G_{\gamma}(x):=\exp\set{-\suit{1+\gamma x}^{-1/\gamma}},
$$
for all $1+\gamma x>0$.
We denote this assumption as $F\in D\suit{G_{\gamma}}$, where $\gamma$ is the so called extreme value index.
Extreme value statistics considers estimating the extreme value index $\gamma$, the functions $a$ and $b$, as well as other practically relevant quantities such as high quantile of $F$. For established results in extreme value statistics, we refer interested readers  to monographs such as \cite{de2006extreme} and \cite{resnick2007heavy}.

Write $U = \set{1/(1-F)}^{\leftarrow}$, where $^{\leftarrow}$ denotes the left-continuous inverse function. Then the necessary and sufficient condition for $F\in D(G_{\gamma})$ with $\gamma\in \mathbb{R}$ is 
\begin{equation}\label{Def: DoA condition}
    \lim_{t\to\infty} \frac{U(tx)-U(t)}{a(t)} = \frac{x^{\gamma}-1}{\gamma},
\end{equation}
for all $x>0$. In this paper, we  focus on the distributions which satisfy the second order refinement of condition \eqref{Def: DoA condition} as follows \citep{de1996generalized}: there exists an eventually positive or negative function $A$ with $\lim_{t\to\infty} A(t) = 0$ and a real number $\rho<0$ such that for all $x>0$,
\begin{equation}\label{Def:SOC:Uni}
    \lim_{t\to \infty} \frac{\frac{U(tx)-U(t)}{a(t)}-\frac{x^{\gamma}-1}{\gamma}}{A(t)}=\frac{1}{\rho}\suit{\frac{x^{\gamma+\rho}-1}{\gamma+\rho}-\frac{x^{\gamma}-1}{\gamma}}.
\end{equation}

Under this condition, one can  find suitable normalizing functions  such that the convergence in  \eqref{Def:SOC:Uni} holds uniformly as follows, see Corollary 2.3.7 in \cite{de2006extreme}.  There exists  functions $a_0(t)\sim a(t)$, $A_0(t)\sim A(t)$ and $b_0(t)$ such that, for any $\varepsilon,\delta>0$, there exists $t_0 = t_0(\varepsilon,\delta)$ such that, for all $t, tx \ge t_0$,
\begin{equation}\label{Eq:SOC:uni:uniform}
   \abs{\frac{\frac{U(tx)-b_0(t)}{a_0(t)}-\frac{x^{\gamma}-1}{\gamma}}{A_0(t)}-\Psi(x)}\le \varepsilon x^{\gamma+\rho}\max\suit{x^{\delta},x^{-\delta}},
\end{equation}
where 
$$
\Psi(x):=\left \{ 
\begin{array}{cc}
    \frac{x^{\gamma+\rho}}{\gamma+\rho}, &\gamma+\rho\ne 0,\\
    \log x, & \gamma+\rho=0.
\end{array}
\right.
$$
For the details of the expression of $a_0,b_0$ and $A_0$, see Corollary 2.3.7 in \cite{de2006extreme}.

Let $l = l(N)$ be an intermediate sequence such that as $N\to\infty$, $l\to\infty, l/N\to 0$. The tail empirical process for the oracle sample is defined as 
$$
Y_{N,l}\suit{x} = \frac{N}{l}\bar{F}_N\set{a_0\suit{\frac{N}{l}}x+b_0\suit{\frac{N}{l}}}, \quad x\in \mathbb{R},
$$
where  $\bar{F}_N:=1-F_N$ and $F_N$ denotes the empirical cumulative distribution function 
$F_N(x) := N^{-1}\sum_{i=1}^N I\suit{X_i\le x}$.

Under the second order condition \eqref{Def:SOC:Uni} and $\sqrt{l}A\suit{N/l} =O(1)$ as $N\to\infty$, \cite{drees2006approximations} shows that, under proper Skorokhod construction,  there exists a sequence of Brownian motions $\set{W_N^*}_{N\ge 1}$, such that, for any $v>0$, 
\begin{equation}\label{Eq: Oracle  TEP}
    \begin{aligned}
        \sup_{x\in \mathbb{D}} \set{z(x)}^{v-1/2}&\Big|\sqrt{l}\set{Y_{N,l}(x)-z(x)} -W_N^*\set{z(x)}\\
        &-\sqrt{l}A_0(N/l)\set{z(x)}^{1+\gamma}\Psi\set{1/z(x)}\Big| =o_P(1),
      \end{aligned} 
\end{equation}
where $z(x)$ and $\mathbb{D}$ are defined in Theorem \ref{Theorem: EP Divide and Conquer} below.

The result \eqref{Eq: Oracle  TEP} is based on the oracle sample.   We intend to provide an analogous result for the tail empirical processes  based on the distributed subsamples in a joint manner. Assume that  the $N$ observations are  distributedly stored in  $m$ machines  with $n$ observations in each machine and then $N = nm$. We will extend our analysis to the case of heterogeneous subsample size in Section \ref{sec:heter}.
The tail empirical process based on the observations $\set{X_1^{(j)},\dots, X_n^{(j)}  }$ in  machine  $j$ is defined as 
$$
Y_{n,k}^{(j)}(x) = \frac{n}{k} \bar{F}_n^{(j)}\set{a_0\suit{\frac{n}{k}}x+b_0\suit{\frac{n}{k}}},\quad \quad j=1,\dots,m,
$$
where $\bar{F}_{n}^{(j)}:=1-F_{n}^{(j)}$ and $F_n^{(j)}$ denotes the empirical distribution function based on the observations in machine $j$.  Here $k = k(N)$ is  an intermediate sequence such that $k\to\infty$ and $ k/n \to 0$, as $N\to\infty$. 

We intend  to relate the asymptotics of  $Y_{N,l}(x)$ and $m^{-1} \sum_{j=1}^m Y_{n,k}^{(j)}(x)$ where $l = km$. Without causing any ambiguity, we use the simplified notation $Y_{N}(x)$ and $Y_n^{(j)}(x)$ for the tail empirical processes based on the oracle sample and  the sample in machine $j$, respectively.

Throughout this paper, let $m,n,k$ be sequences of integers such that, $m = m(N)\to\infty, n =n(N)\to\infty, k=k(N)\to\infty$ and $ k/n \to 0$ as $N\to\infty$. We  assume the following conditions for the sequences $k$ and $m$:
 
\begin{enumerate}[label=(A\arabic*)]
  \item \label{Condition:u:k1}$\sqrt{km}A(n/k)=O(1)$  as $N \to \infty$.
\end{enumerate}

\begin{enumerate}[label=(A\arabic*)]
  \setcounter{enumi}{1}
  \item \label{Condition:u:k2} 
  $\eta := \liminf_{N\to\infty} \log k/\log m -1>0$.
\end{enumerate}

\begin{enumerate}[label=(A\arabic*)]
  \setcounter{enumi}{2}
  \item \label{Condition:u:k3} 
  $km(\log k)^2 /n =O(1)$ as $N\to\infty$.
\end{enumerate}

\begin{remark}
Note that $n/k = N/(nm)$. 
    Condition \ref{Condition:u:k1} is a typical condition assumed in extreme value analysis to guarantee  finite asymptotic bias in the oracle estimator. Condition \ref{Condition:u:k2} states that, the number of machines ($m$) should be smaller than the number of observations  used in each machine ($k$). Similar conditions are  assumed in the literature of distributed inference for other statistical procedures, see e.g.  Corollary 3.4 in   \cite{volgushev2019distributed} and  Theorem 4 in  \cite{zhu2021least}.
    Condition \ref{Condition:u:k3} is an additional technical condition, which requires that  the number of observations ($n$) in each machine  is  at a sufficiently high level for  given $k$ and $m$.
\end{remark}

\begin{remark}
  One example for $k$ and $m$ satisfying conditions \ref{Condition:u:k1}-\ref{Condition:u:k3}  can be given as follows. 
Let $m \asymp n^a$ for some $0\le a<\frac{(-1)\lor \rho}{1-(-1)\lor \rho}$, where $\rho$ is the second parameter in \eqref{Def:SOC:Uni}, and  $k \asymp n^b$ for some $$a<b <\min\suit{1-a,\frac{-2\rho-a}{-2\rho+1}},
$$ then  conditions \ref{Condition:u:k1}-\ref{Condition:u:k3} hold with $\eta = b/a - 1>0$.
\end{remark}

The following theorem shows the weighted approximations of the tail empirical processes based on the distributed subsamples in a joint manner.

\begin{theorem}\label{Theorem: EP Divide and Conquer}
    Suppose that the distribution function $F$ satisfies the second order condition \eqref{Def:SOC:Uni} with $\gamma \in \mathbb{R}$ and   $\rho<0$. 
    Let $m,n,k$ be sequences of integers  satisfying  
      conditions \ref{Condition:u:k1}-\ref{Condition:u:k3} and $x_0>-1/(\gamma \lor 0)$. Then under suitable Skorokhod construction,  there exist $m$ independent sequences of Brownian motions $\set{W_n^{(j)}}_{n\ge 1}, j=1,\dots,m$, such that for any  $v\in((2+\eta)^{-1},2^{-1})$, as $N \to \infty$, 
      $$
      \begin{aligned}
        \max_{1\le j\le m} \sup_{x\in \mathbb{D}} \set{z(x)}^{v-1/2} &\Big|\sqrt{km}\set{Y_n^{(j)}(x)-z(x)}-\sqrt{m}W_n^{(j)}\set{z(x)} \\
        &-\sqrt{km}A_0(n/k)\set{z(x)}^{1+\gamma}\Psi\set{1/z(x)}\Big|  = o_P(1),
      \end{aligned}
      $$
      where
      $$
      z(x)=(1+\gamma x)^{-1/\gamma}, \quad  \mathbb{D}=\set{x:x_0\le x< \frac{1}{(-\gamma) \lor 0}}.
      $$
      Moreover,  as $N \to \infty$,
      $$
      \begin{aligned}
        \sup_{x\in \mathbb{D}} \set{z(x)}^{v-1/2}\Big|&\sqrt{km}\set{Y_{N}(x)-z(x)} -W_N\set{z(x)} \\
        &-\sqrt{km}A_0(n/k)\set{z(x)}^{1+\gamma}\Psi\set{1/z(x)}\Big| =o_P(1),
      \end{aligned} 
      $$
      where  $W_N = m^{-1/2}\sum_{j=1}^m W_n^{(j)}$ is  a version of the Brownian motion $W_N^*$ in \eqref{Eq: Oracle TEP}.
\end{theorem}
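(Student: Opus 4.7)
The plan is to prove the first (maximal) assertion and then deduce the oracle statement from it by averaging. Because the machines hold disjoint subsamples of common size $n$ with $N=mn$ and $l=km$, one has $N/l=n/k$, so the oracle threshold $a_0(N/l)x+b_0(N/l)$ equals the machine-wise threshold. A direct calculation gives the pointwise aggregation identity
$$
Y_N(x)=\frac{1}{m}\sum_{j=1}^{m}Y_n^{(j)}(x).
$$
Averaging the $j$-wise approximation over $j$ yields
$$
\sqrt{km}\bigl\{Y_N(x)-z(x)\bigr\}=\frac{1}{\sqrt{m}}\sum_{j=1}^{m}W_n^{(j)}\{z(x)\}+\sqrt{km}\,A_0(n/k)\{z(x)\}^{1+\gamma}\Psi(1/z(x))+o_P(1),
$$
uniformly in $x\in\mathbb{D}$ with weight $\{z(x)\}^{v-1/2}$, since the bias is common to all machines and the weighted residual is dominated by the $j$-wise maximum. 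By independence of the $W_n^{(j)}$ and Gaussian additivity, $W_N:=m^{-1/2}\sum_{j=1}^{m}W_n^{(j)}$ is itself a standard Brownian motion, which matches the statement.

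For the first assertion I would construct the $W_n^{(j)}$ via a Koml\'os--Major--Tusn\'ady strong approximation on each machine. Working with the uniform companions $U_i^{(j)}:=1-F(X_i^{(j)})$ of the sample on machine $j$, the classical KMT theorem supplies independent Brownian bridges $B_n^{(j)}$, convertible to Brownian motions $W_n^{(j)}$ in the usual way, satisfying a uniform-in-$j$ bound of the form
$$
P\!\left(\sqrt{n}\sup_{0\le u\le 1}\bigl|\alpha_n^{(j)}(u)-B_n^{(j)}(u)\bigr|>C_1\log n+\lambda\right)\le C_2 e^{-C_3\lambda}.
$$
Taking $\lambda\asymp\log(mn)$ and applying a union bound over $j$ yields $\max_{j}\sup_u|\alpha_n^{(j)}-B_n^{(j)}|=O_P(\log n/\sqrt{n})$, which by (C3) is $o_P(1/\sqrt{km})$. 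Transferring this estimate from $\alpha_n^{(j)}$ to $Y_n^{(j)}(x)$ uses the uniform second-order expansion \eqref{Eq: Uniform Convergence of SOC} to match the tail threshold $a_0(n/k)x+b_0(n/k)$ to a uniform level on scale $k/n$: the centering appears as $z(x)$, the Gaussian part as $\sqrt{m}\,W_n^{(j)}\{z(x)\}$ (after the rescaling that turns $\sqrt{n}$-errors at uniform level $k/n$ into $\sqrt{km}$-errors on the $x$-scale), and the bias $\sqrt{km}\,A_0(n/k)\{z(x)\}^{1+\gamma}\Psi(1/z(x))$, which is of bounded order by (C1).

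The genuine obstacle is the \emph{weighted} uniformity across $m\to\infty$ machines. The weight $\{z(x)\}^{v-1/2}$ with $v<1/2$ blows up as $z(x)\downarrow 0$, so a plain sup-norm KMT bound is not sharp enough near the right endpoint of $\mathbb{D}$. Following ideas from \cite{beirlant1996maximal} but in a weighted setting, I would dyadically decompose $(0,1]$ into $I_r=(2^{-(r+1)},2^{-r}]$ for $r=0,1,\dots,\lfloor\log_2(n/k)\rfloor$, apply a local KMT tail bound on each $I_r$ together with L\'evy's inequality for the increments of $B_n^{(j)}$, and combine the results by a double union bound over $r$ and $j$. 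The exponential-in-$\lambda$ KMT tail is what makes this affordable: (C2) prevents $m$ from eroding the polynomial margin produced by the weight and the dyadic refinement, while (C3) keeps the $\log n/\sqrt{n}$ rate below the target $1/\sqrt{km}$; the lower bound $v>(2+\eta)^{-1}$ is the exact threshold at which the weighted dyadic sum of tails becomes summable. With the weighted uniform KMT in hand, the $j$-wise approximation for $Y_n^{(j)}$ carries through uniformly over $j$, and the oracle statement follows by the averaging step of the first paragraph.
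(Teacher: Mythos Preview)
Your high-level architecture is correct and matches the paper: the averaging identity $Y_N(x)=m^{-1}\sum_j Y_n^{(j)}(x)$ (valid because $N/l=n/k$ under homogeneous subsample sizes) reduces the oracle statement to the maximal one, and the maximal one is attacked by a per-machine strong approximation followed by a union bound over $j$. You also correctly identify the weight $\{z(x)\}^{v-1/2}$ as the real obstruction.

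Where your proposal has a genuine gap is in the claimed mechanism for the weighted per-machine bound. You write that ``the exponential-in-$\lambda$ KMT tail is what makes this affordable'' and that $v>(2+\eta)^{-1}$ is the threshold at which a dyadic sum of exponential tails becomes summable. This is not how the argument works, and the misconception matters. Once the weight $s^{v-1/2}$ is present, the tail of the per-machine approximation error is \emph{polynomial}, not exponential. The reason is the corner region: for $s$ below the first order statistic (equivalently, $s\in(0,1/n]$ on the uniform scale) there is nothing for KMT to couple, and one must bound $s^{v-1/2}|e_n(s)|$ and $s^{v-1/2}|B_n(s)|$ separately. These quantities have tails of order $r^{-1/(1/2-v)}$, not $e^{-cr}$. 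The paper obtains precisely this: it invokes Mason's weighted KMT inequality on the bulk $[1/n,1-1/n]$ (which does give an exponential bound there), then handles $(0,1/n]$ by elementary estimates on $nU_{1,n}$, on $nU_n(1/n)$, and on the Brownian supremum, arriving at the per-machine inequality
\[
P\bigl(\delta_n^{(j)}\ge t\bigr)\le C_1\, r^{-1/(1/2-v)},\qquad k^{-v} r\log r=t,
\]
which is the paper's Proposition~\ref{Theorem: EP single machine}. The polynomial exponent $1/(1/2-v)$ is exactly what produces the constraint on $v$: setting $t=C/\sqrt{m}$ and applying a union bound over $j$ gives $P(\Delta_N>C)\lesssim m\,r^{-1/(1/2-v)}$, and a short calculation using (C2) shows this tends to zero if and only if $v>(2+\eta)^{-1}$. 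With an exponential tail this constraint would not appear at all, so your explanation of where the lower bound on $v$ comes from cannot be right.

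Your dyadic-decomposition sketch is also incomplete on its own terms: the stated range $r\le\lfloor\log_2(n/k)\rfloor$ does not cover the corner below the first order statistic, and ``a local KMT tail bound on each $I_r$'' does not exist in the form you need, since KMT controls the global sup-norm and does not improve on short subintervals. What is actually required is either (i) a ready-made weighted KMT inequality (the paper uses the Mason--Cs\"org\H{o}--Horv\'ath result) for the bulk plus a separate corner argument, or (ii) a direct treatment of the corner where you bound the empirical and Gaussian pieces individually; both routes lead to the same polynomial tail. Once you have Proposition~\ref{Theorem: EP single machine} with the correct polynomial rate, the rest of your outline (verify that $t=C/\sqrt{m}$ satisfies the auxiliary conditions via (C1)--(C3), take the union bound, then average for the oracle part) goes through exactly as in the paper.
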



For $\gamma>0$, a similar but simpler result is given as follows.
\begin{theorem}\label{Theorem: DC TEP gamma>0}
    Suppose that the distribution function $F$ satisfies the second order condition \eqref{Def:SOC:Uni} with $\gamma>0$ and   $\rho<0$. 
    Let $m,n,k$ be sequences of real numbers that satisfy 
      conditions \ref{Condition:u:k1}-\ref{Condition:u:k3} and $\tilde{x}_0> 0$.  Then under suitable Skorokhod construction, there exist $m$ independent sequences of Brownian motions $\set{W_n^{(j)}, n\ge 1}, j=1,\dots,m$, such that for any  $v\in((2+\eta)^{-1},2^{-1})$, as $N \to \infty$,  
      $$
      \begin{aligned}
      \max_{1\le j\le m}\sup_{x\ge \tilde{x}_0} x^{(1/2-v)/\gamma}\Big| &\sqrt{km}\set{\frac{n}{k}\bar{F}_n^{(j)}\suit{xU(n/k)}-x^{-1/\gamma}} \\
        & -\sqrt{m}W_n^{(j)}(x^{-1/\gamma})-\sqrt{km}A_0\suit{\frac{n}{k}} x^{-1/\gamma}\frac{x^{\rho/\gamma}-1}{\gamma\rho}
        \Big|= o_P(1).      
      \end{aligned}
      $$
      Moreover,   as $N\to\infty$,
      $$
      \begin{aligned}
       \sup_{x\ge \tilde{x}_0} x^{(1/2-v)/\gamma}\Big| &\sqrt{km}\set{\frac{n}{k}\bar{F}_N\suit{xU(n/k)}-x^{-1/\gamma}} \\
        & -W_N(x^{-1/\gamma})-\sqrt{km}A_0\suit{\frac{n}{k}} x^{-1/\gamma}\frac{x^{\rho/\gamma}-1}{\gamma\rho}
        \Big|= o_P(1),
      \end{aligned}
        $$
        where  $W_N = m^{-1/2}\sum_{j=1}^m W_n^{(j)}$.
\end{theorem}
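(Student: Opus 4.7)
The plan is to deduce Theorem \ref{Theorem: DC TEP gamma>0} by running through the conclusion of Theorem \ref{Theorem: EP Divide and Conquer} in the simpler parametrization available when $\gamma>0$. For $\gamma>0$, $U$ is regularly varying with positive index, and the second-order condition \eqref{Def: Second Order Condition} is equivalent to the statement
\[
\lim_{t\to\infty}\frac{U(tx)/U(t)-x^{\gamma}}{A(t)}=x^{\gamma}\frac{x^{\rho}-1}{\rho},\qquad x>0.
\]
This allows one to take $a_{0}(t)=\gamma U(t)$, $b_{0}(t)=U(t)$ and $A_{0}(t)=A(t)$ in the uniform convergence \eqref{Eq: Uniform Convergence of SOC}, with the limit function replaced by the equivalent form $\Psi(u)=(u^{\gamma+\rho}-u^{\gamma})/(\gamma\rho)$ in place of the one displayed in the text. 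Both $\Psi$'s are legitimate second-order limits, corresponding to different but admissible choices of the normalizing triple $(a_{0},b_{0},A_{0})$; since the proof of Theorem \ref{Theorem: EP Divide and Conquer} rests on \eqref{Eq: Uniform Convergence of SOC} together with Potter-type second-order bounds, it applies verbatim with this alternative $(a_{0},b_{0},A_{0},\Psi)$.

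Under this parametrization, $a_{0}(n/k)\xi+b_{0}(n/k)=U(n/k)(1+\gamma\xi)$. The substitution $x=1+\gamma\xi$ (i.e.\ $\xi=(x-1)/\gamma$) turns the machine-wise tail empirical process $Y_{n}^{(j)}(\xi)$ into $(n/k)\bar F_{n}^{(j)}(xU(n/k))$, the limit $z(\xi)=(1+\gamma\xi)^{-1/\gamma}$ into $x^{-1/\gamma}$, and the weight $\{z(\xi)\}^{v-1/2}$ into $x^{(1/2-v)/\gamma}$. The domain $\xi\in[\xi_{0},\infty)$ with $\xi_{0}>-1/\gamma$ corresponds to $x\geq 1+\gamma\xi_{0}$, and the choice $\xi_{0}=(\tilde x_{0}-1)/\gamma$ recovers $x\geq\tilde x_{0}$ for any $\tilde x_{0}>0$. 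The stochastic term $\sqrt{m}W_{n}^{(j)}(z(\xi))$ becomes $\sqrt{m}W_{n}^{(j)}(x^{-1/\gamma})$, matching the theorem, and the bias $\{z(\xi)\}^{1+\gamma}\Psi(1/z(\xi))$ becomes
\[
x^{-(1+\gamma)/\gamma}\,\frac{x^{(\gamma+\rho)/\gamma}-x}{\gamma\rho}=\frac{x^{(\rho-1)/\gamma}-x^{-1/\gamma}}{\gamma\rho}=x^{-1/\gamma}\,\frac{x^{\rho/\gamma}-1}{\gamma\rho},
\]
which, multiplied by $\sqrt{km}A_{0}(n/k)$, coincides with the deterministic bias in Theorem \ref{Theorem: DC TEP gamma>0}.

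The aggregated oracle-sample statement then follows by averaging the $m$ machine-wise conclusions. Because subsample sizes are equal, $(n/k)\bar F_{N}(\cdot)=m^{-1}\sum_{j=1}^{m}(n/k)\bar F_{n}^{(j)}(\cdot)$, so summing the machine-wise expansions and dividing by $m$ yields $\sqrt{km}\{(n/k)\bar F_{N}(xU(n/k))-x^{-1/\gamma}\}\approx W_{N}(x^{-1/\gamma})+\sqrt{km}A_{0}(n/k)x^{-1/\gamma}(x^{\rho/\gamma}-1)/(\gamma\rho)$, where $W_{N}:=m^{-1/2}\sum_{j=1}^{m}W_{n}^{(j)}$ is a Brownian motion by the independence of the machine-wise Brownian motions constructed in Theorem \ref{Theorem: EP Divide and Conquer}. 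The main technical obstacle is justifying that the alternative $(a_{0},b_{0},A_{0},\Psi)$ associated with the case $\gamma>0$ indeed delivers the uniform convergence in the weighted supremum required to invoke Theorem \ref{Theorem: EP Divide and Conquer}; this is a routine application of Potter bounds for second-order regularly varying functions, but it is the step that distinguishes the heavy-tailed case from the general one and explains why Theorem \ref{Theorem: DC TEP gamma>0} takes the cleaner form displayed.
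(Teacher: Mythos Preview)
Your approach is correct and matches what the paper intends. The paper does not give an explicit proof of this theorem; it is introduced with the phrase ``a similar but simpler result'' immediately after Theorem~\ref{Theorem: EP Divide and Conquer}, and no separate argument appears in the appendix. Your reconstruction---specializing to the multiplicative second-order parametrization $a_0(t)=\gamma U(t)$, $b_0(t)=U(t)$ available when $\gamma>0$, then applying the substitution $x=1+\gamma\xi$ to translate the conclusion of Theorem~\ref{Theorem: EP Divide and Conquer} into the displayed form---is the natural derivation, and your computation of the bias term $\{z(\xi)\}^{1+\gamma}\Psi(1/z(\xi))=x^{-1/\gamma}(x^{\rho/\gamma}-1)/(\gamma\rho)$ is correct. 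The only point requiring care, which you correctly flag, is that the machinery behind Proposition~\ref{Theorem: EP single machine} (in particular Lemma~\ref{lemma: first order convergernce} and the analogue of \eqref{Eq: Uniform Convergence of SOC}) must be re-verified for this alternative normalizing triple; for $\gamma>0$ this is standard, since the Potter-type uniform bound for $U(tx)/U(t)$ (cf.\ \cite{de2006extreme}, Section~2.3) replaces the role of Corollary~2.3.7 and is indeed simpler because no case analysis on the sign of $\gamma+\rho$ is needed.
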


To prove these theorems, we need a fundamental inequality to bound the approximation error of the tail empirical process $Y_{n}^{(j)}(x)$ to the Gaussian process in  machine $j$, which is of independent interest.
Consider a positive sequence $t = t(N) \to 0$ as $N\to\infty$,   satisfying 
\begin{eqnarray}
     (n/k)^{-1/2}\log k/t&=&O(1),\label{Eq: r 1}\\
    k^{1/2}A_0(n/k)/t& = &O(1), \label{Eq: r 2}\\
    \text{for some}\   \tilde{\varepsilon}>0,  \set{A_0(n/k)}^{1/2-\tilde{\varepsilon}}/t&=&o(1). \label{Eq: r 3}  
\end{eqnarray}

\begin{prop}\label{Theorem: EP single machine}
    Suppose that the distribution function $F$ satisfies the second order condition \eqref{Def:SOC:Uni} with $\gamma \in \mathbb{R}$ and   $\rho<0$. 
    Let $t$ be a  sequence of real numbers  satisfying conditions \eqref{Eq: r 1}-\eqref{Eq: r 3} and 
    $x_0>-1/(\gamma \lor 0 )$.
    Then  for sufficiently large $n$, under suitable Skorokhod construction, there exist $m$ independent sequences of Brownian motions $\set{W_n^{(j)}, n\ge 1}, j=1,\dots,m$ and 
     a constant $C_1=C_1(v)>0$  such that, for any $v\in (0,1/2)$,
     $$    
     P\suit{ \delta_{n}^{(j)} \ge  t}\le C_1 r^{-\frac{1}{1/2-v}},
    $$
     where 
  $$
  \begin{aligned}
    \delta_{n}^{(j)} =\sup_{x\in \mathbb{D}} \set{z(x)}^{v-1/2}\Big|&\sqrt{k}\set{Y_n^{(j)}(x)-z(x)} \\
    &-W_n^{(j)}\set{z(x)}-\sqrt{k}A_0(n/k)\set{z(x)}^{1+\gamma}\Psi\set{1/z(x)}\Big|,   
  \end{aligned}
  $$
  and  $r = r(t,k)$ is defined by  $k^{-v} r\log r = t$.
  \end{prop}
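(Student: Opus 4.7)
The plan is to reduce the weighted tail empirical process to the uniform empirical process, apply a Koml\'os--Major--Tusn\'ady (KMT) strong approximation independently on each machine's sample, and then convert the KMT exponential tail into the claimed polynomial tail via a dyadic chaining that absorbs the singular weight $\{z(x)\}^{v-1/2}$. This follows the general outline of \cite{drees2006approximations} for the oracle case, but with an explicit quantitative probability bound that is needed in Theorem~\ref{Theorem: EP Divide and Conquer} to take a maximum over $m$ machines.

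First I would pass to uniforms. Setting $V_i^{(j)} = 1 - F(X_i^{(j)})$ gives i.i.d.\ Uniform$(0,1)$ variables; let $G_n^{(j)}$ denote their empirical cdf, $\alpha_n^{(j)}(s) = \sqrt{n}(G_n^{(j)}(s) - s)$ the uniform empirical process, and $s_n(x) := 1 - F(a_0(n/k)x + b_0(n/k))$. This yields the split
$$
\sqrt{k}\{Y_n^{(j)}(x) - z(x)\} = \sqrt{n/k}\,\alpha_n^{(j)}(s_n(x)) + \sqrt{k}\bigl(\tfrac{n}{k}s_n(x) - z(x)\bigr).
$$
Applying the uniform expansion \eqref{Eq: Uniform Convergence of SOC} to $U$ shows that the deterministic second summand equals $\sqrt{k}A_0(n/k)\{z(x)\}^{1+\gamma}\Psi(1/z(x))$ up to a residual which, weighted by $\{z(x)\}^{v-1/2}$, is $o(t)$ uniformly on $\mathbb{D}$ thanks to \eqref{Eq: r 3}. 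The task reduces to bounding the probability that
$$
\Delta_n^{(j)} := \sup_{x\in\mathbb{D}}\{z(x)\}^{v-1/2}\bigl|\sqrt{n/k}\,\alpha_n^{(j)}(s_n(x)) - W_n^{(j)}(z(x))\bigr|
$$
exceeds $t$.

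Second, I would introduce the Brownian approximations. Applied independently on each machine (producing $m$ independent sequences of bridges and hence independent $W_n^{(j)}$), the KMT inequality gives Brownian bridges $B_n^{(j)}$ and absolute constants $a,b,c$ with
$$
P\!\left(\sup_{s\in[0,1]}|\alpha_n^{(j)}(s) - B_n^{(j)}(s)| > \frac{a\log n + y}{\sqrt{n}}\right)\le b\,e^{-cy}.
$$
On the tail window $s\in[0, c\,k/n]$ each $B_n^{(j)}$ differs from a Brownian motion $W_n^{(j)}$ by $O_P(k/n)$, and the scaling identity $\sqrt{n/k}\,W_n^{(j)}(s_n(x))\stackrel{d}{=} W_n^{(j)}((n/k)s_n(x))$ lets me replace $(n/k)s_n(x)$ by $z(x)$ at the cost of a Brownian increment over an interval of length $O(A_0(n/k))$. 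After weighting by $\{z(x)\}^{v-1/2}$, all these contributions are dominated by $t$ under conditions \eqref{Eq: r 1}--\eqref{Eq: r 2}.

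Third, and this is the core of the argument, I would dyadically decompose $\mathbb{D}$ into blocks $\mathbb{D}_i = \{z(x)\in[2^{-i-1}, 2^{-i}]\}$ for $0\le i \lesssim \log k$. On $\mathbb{D}_i$ the weight is of order $2^{i(1/2-v)}$, while (i) the KMT discrepancy $\sqrt{n/k}|\alpha_n^{(j)}-B_n^{(j)}|$ is uniformly $O(\log n/\sqrt{k})$ with exponential tails and (ii) the Brownian oscillation of $W_n^{(j)}$ over a window of length $O(2^{-i})$ is $O(2^{-i/2}\sqrt{\log 2^i})$ with Gaussian tails. A union bound over the $O(\log k)$ dyadic levels with exponential parameter chosen as $y \asymp 2^{-i(1/2-v)}\sqrt{k}\,t$, combined with inversion of the defining relation $k^{-v}r\log r = t$, then yields the claimed bound $C_1 r^{-1/(1/2-v)}$: the $\log r$ factor in the definition of $r$ traces back to the $\log n$ in KMT, and the polynomial exponent $1/(1/2-v)$ is exactly what the singular weight forces when one converts an exponential tail into a weighted one.

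\textbf{Main obstacle.} The hardest step is the third. Under the singular weight $\{z(x)\}^{v-1/2}$ one has to balance, at the finest dyadic scale $i\asymp\log k$ (i.e.\ $z(x)\asymp 1/k$), an exploding weight $k^{1/2-v}$ against the combined KMT-plus-Brownian error, and still end up with the tail of polynomial order $r^{-1/(1/2-v)}$ with the $\log r$ factor matched exactly. Achieving this requires a carefully calibrated dyadic decomposition together with a level-dependent choice of KMT threshold, and conditions \eqref{Eq: r 1}--\eqref{Eq: r 3} on the tolerance $t$ are exactly what make the balance feasible.
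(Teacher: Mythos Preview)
Your reduction to the uniform empirical process and the splitting of $\delta_n^{(j)}$ into a stochastic KMT piece and a deterministic second-order drift are correct and parallel the paper's route. The gap is in your third step, where you misidentify both the mechanism and the source of the polynomial tail $r^{-1/(1/2-v)}$.

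The paper does not do a dyadic decomposition. On the bulk range $1/n\le s\le 1-1/n$ it invokes a \emph{weighted} KMT inequality due to Mason (Theorem~3.1.2 in \cite{mason2001exponential}) which directly yields
\[
P\Bigl(\sup_{1/n\le s\le 1-1/n}\{s(1-s)\}^{v-1/2}\bigl|e_n(s)-B_n(s)\bigr|>n^{-v}r\Bigr)\le C\,e^{-C'r},
\]
so after the rescaling $s\mapsto ks/n$ the weighted control on $w_n$ is obtained with no chaining at all. More importantly, this bound is \emph{exponential} in $r$, not polynomial; nothing in a dyadic union of exponentially- or Gaussian-tailed events can by itself manufacture the polynomial rate $r^{-1/(1/2-v)}$ that you attribute to ``what the singular weight forces when one converts an exponential tail into a weighted one.''

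The polynomial exponent comes from the corner $s\in(0,1/n]$ (equivalently $z(x)<1/k$ after rescaling), which your dyadic scheme, truncated at $i\lesssim\log k$, never touches. On that corner the paper bounds $s^{v-1/2}|e_n(s)|$ and $s^{v-1/2}|B_n(s)|$ separately; the decisive contribution is $(nU_{1,n})^{v-1/2}$, where $U_{1,n}$ is the minimum of $n$ uniforms. Since $nU_{1,n}$ is approximately Exp$(1)$,
\[
P\bigl\{(nU_{1,n})^{v-1/2}>cr\bigr\}=P\bigl\{nU_{1,n}<(cr)^{-1/(1/2-v)}\bigr\}\asymp r^{-1/(1/2-v)},
\]
and this is the true origin of the exponent. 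Likewise, the $\log r$ in the defining relation $k^{-v}r\log r=t$ arises from the corner analysis (it is absorbed by the Poisson-type count $nU_n(1/n)$ in a product splitting $nU_n(1/n)\cdot(nU_{1,n})^{v-1/2}$), not from the $\log n$ of KMT as you claim. If carried through, your dyadic argument would yield an exponential bound on $\{z(x)\ge 1/k\}$, but would leave the corner, and hence the actual $r^{-1/(1/2-v)}$ contribution, unaddressed.
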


  Proposition \ref{Theorem: EP single machine} guarantees that the approximation errors $\delta_{n}^{(j)}, j=1,\dots,m$ are uniformly negligible, which is a key step to prove   Theorems \ref{Theorem: EP Divide and Conquer} and \ref{Theorem: DC TEP gamma>0}.

\section{Distributed Tail Quantile Processes} \label{sec:quantile}

\cite{drees1998smooth} provides a weighted approximation of the tail quantile process. Assume the second order condition \eqref{Def:SOC:Uni} and $\sqrt{l}A\suit{N/l} =O(1)$ as $N\to\infty$, with the same   Brownian motions $\set{W_N^{*}}_{ N\ge 1}$ in \eqref{Eq: Oracle  TEP}, we have that, for any $v>0$, 
$$
    \begin{aligned}
      \sup_{1/l\le s\le 1}s^{v+1/2+\gamma} & \Big |\sqrt{l}\suit{Q_{N,l}(s)-\frac{s^{-\gamma}-1}{\gamma}}\\
        &-s^{-\gamma-1}W_N^{*}(s)-\sqrt{l}A_0\suit{\frac{N}{l}}\Psi(s^{-1})\Big |=o_P(1).
      \end{aligned} 
$$  
Again, this result is based on the oracle sample. We intend to provide  weighted approximations of the tail quantile processes based on the distributed subsamples in a joint manner.
The tail quantile process based on the observations in machine  $j$ is defined as 
$$
Q_{n,k}^{(j)}(s) = \frac{X^{(j)}_{n-[ks],n} - b_0\suit{\frac{n}{k}}}{a_0\suit{\frac{n}{k}}} , \quad \quad j=1,\dots,m,
$$
where $X^{(j)}_{n,n} \ge \cdots\ge X^{(j)}_{1,n}$ are the order statistics of the observations in machine $j$.

We aim at linking the asymptotics of $Q_{N,l}(s)$ and $m^{-1}\sum_{j=1}^m Q_{n,k}^{(j)}(s)$ where $l = km$.  Again, without causing any ambiguity, we use the simplified notation $Q_{N}(s)$ and $Q_{n}^{(j)}(s)$ for the tail quantile process based on the oracle sample and the sample in machine $j$, respectively.
Since  the average of the tail quantile processes based on distributed subsamples in $m$ machines is {\it not} equal to the tail quantile process of the oracle sample, we cannot follow similar steps as in Section \ref{sec:empirical}. Instead, we achieve our goal by  ``inverting" the result for the tail empirical processes. More specifically, we intend to  replace $x$  in Theorem \ref{Theorem: EP Divide and Conquer} by $Q_n^{(j)}(s)$ for $s\in [k^{-1+\delta}, 1]$.

The following theorem shows that,
with the same  sequences of Brownian motions defined in Theorem \ref{Theorem: EP Divide and Conquer}: $\set{W_n^{(j)}}_{n\ge 1}$, $j=1,\dots,m$, the approximation errors of the tail quantile processes
 are uniformly negligible for $1\le j\le m$.

\begin{theorem}\label{Theorem: TQP DC}
  Assume the same conditions as in Theorem \ref{Theorem: EP Divide and Conquer}. Then  for any $v\in ((2+\eta)^{-1},1/2)$ and $\delta \in (0,1)$,  as $N \to \infty$,
    $$
    \begin{aligned}
      \max_{1\le j\le m}\sup_{k^{-1+\delta}\le s\le 1} s^{v+1/2+\gamma}\Big |&\sqrt{km}\suit{Q_{n}^{(j)}(s)-\frac{s^{-\gamma}-1}{\gamma}}\\
        &-\sqrt{m}s^{-\gamma-1}W_n^{(j)}(s)-\sqrt{km}A_0\suit{\frac{n}{k}}\Psi(s^{-1})\Big |=o_P(1).  
    \end{aligned}
    $$
    Moreover,  as $N\to\infty$,
    $$
        \begin{aligned}
           \sup_{k^{-1+\delta}\le s\le 1} s^{v+1/2+\gamma}\Big |&\sqrt{km}\suit{Q_{N}(s)-\frac{s^{-\gamma}-1}{\gamma}}\\
            &-s^{-\gamma-1}W_N(s)-\sqrt{km}A_0\suit{\frac{n}{k}}\Psi(s^{-1})\Big |=o_P(1).
          \end{aligned} 
    $$
    Here, $\set{W_n^{(j)}}_{n\ge 1}, j=1,\dots,m$ are the same Brownian motions constructed as in  Theorem \ref{Theorem: EP Divide and Conquer} and  $W_N = m^{-1/2}\sum_{j=1}^m W_n^{(j)}$. 
    Consequently,  $m^{-1}\sum_{j=1}^m Q_n^{(j)}(s)$ has the same asymptotic expansion as that for $Q_N(s)$, uniformly for  $s\in [k^{-1+\delta},1]$.
\end{theorem}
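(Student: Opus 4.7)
The plan is to \emph{invert} the weighted approximation of the tail empirical processes established in Theorem \ref{Theorem: EP Divide and Conquer}. The starting observation is that, because $\bar{F}_n^{(j)}$ jumps by $1/n$ at each order statistic, $Y_n^{(j)}(Q_n^{(j)}(s))=[ks]/k$. Substituting $x=Q_n^{(j)}(s)$ into the weighted approximation of $Y_n^{(j)}(x)-z(x)$ therefore converts that statement into a statement about $s-z(Q_n^{(j)}(s))$. A Taylor expansion of $z$ around $(s^{-\gamma}-1)/\gamma$, using $z'((s^{-\gamma}-1)/\gamma)=-s^{1+\gamma}$, then produces the desired expansion for $Q_n^{(j)}(s)-(s^{-\gamma}-1)/\gamma$.

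First I would verify that $Q_n^{(j)}(s)\in\mathbb{D}$ for all $j\le m$ and $s\in[k^{-1+\delta},1]$ with probability tending to one; this is a routine consequence of Proposition \ref{Theorem: EP single machine} combined with a union bound over the $m$ machines. On this event, Theorem \ref{Theorem: EP Divide and Conquer} evaluated at $x=Q_n^{(j)}(s)$, together with the above identity and $\sqrt{km}\,|[ks]/k-s|\le\sqrt{m/k}=o(1)$ guaranteed by (C2), yields
\begin{equation*}
\sqrt{km}\{s-z(Q_n^{(j)}(s))\}=\sqrt{m}\,W_n^{(j)}(z(Q_n^{(j)}(s)))+\sqrt{km}\,A_0(n/k)\{z(Q_n^{(j)}(s))\}^{1+\gamma}\Psi(1/z(Q_n^{(j)}(s)))+\varepsilon_n^{(j)}(s),
\end{equation*}
where $\max_{j}\sup_s\{z(Q_n^{(j)}(s))\}^{v-1/2}|\varepsilon_n^{(j)}(s)|=o_P(1)$. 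A second-order Taylor expansion then rewrites the left-hand side as $\sqrt{km}\,s^{1+\gamma}\{Q_n^{(j)}(s)-(s^{-\gamma}-1)/\gamma\}$ plus a quadratic remainder of order $\sqrt{km}\,s^{2+\gamma}\{Q_n^{(j)}(s)-(s^{-\gamma}-1)/\gamma\}^2$, which is negligible once one invokes the preliminary single-machine rate $|Q_n^{(j)}(s)-(s^{-\gamma}-1)/\gamma|=O_P(s^{-v-1/2-\gamma}/\sqrt{k})$ coming from the classical Drees (1998) expansion. Dividing by $s^{1+\gamma}$ and multiplying by the weight $s^{v+1/2+\gamma}$ matches the target form.

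The hard part will be controlling the Brownian-motion substitution $\sqrt{m}\{W_n^{(j)}(z(Q_n^{(j)}(s)))-W_n^{(j)}(s)\}$ uniformly in $j$ and $s$, because the amplifying factor $\sqrt{m}$ precludes naive use of the L\'evy modulus of continuity. Using the preliminary rate above one has $|z(Q_n^{(j)}(s))-s|=O_P(s^{1/2-v}/\sqrt{k})$; applying the modulus of continuity of each $W_n^{(j)}$ on a neighbourhood of $s$ of that size and then multiplying by $\sqrt{m}$ and the weight $s^{v+1/2+\gamma}$, together with the restriction $s\ge k^{-1+\delta}$, yields a bound that is $o_P(1)$ uniformly in $j\le m$ precisely under (C2) with $\eta>0$ and the choice $v\in((2+\eta)^{-1},1/2)$, with $\delta$ chosen sufficiently small. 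The analogous perturbation in the bias term is handled by a straightforward Lipschitz estimate combined with $\sqrt{km}\,A_0(n/k)=O(1)$ from (C1).

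Finally, the oracle statement is obtained either by repeating the same inversion argument on the oracle part of Theorem \ref{Theorem: EP Divide and Conquer}, or equivalently by averaging the per-machine expansions over $j$ and using $W_N=m^{-1/2}\sum_{j=1}^m W_n^{(j)}$ to identify the aggregated Brownian motion. Averaging also delivers the concluding claim that $m^{-1}\sum_{j=1}^m Q_n^{(j)}(s)$ admits the same weighted asymptotic expansion as $Q_N(s)$, uniformly in $s\in[k^{-1+\delta},1]$.
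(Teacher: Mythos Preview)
Your inversion strategy---substitute $x=Q_n^{(j)}(s)$ into the tail empirical approximation, use $Y_n^{(j)}(Q_n^{(j)}(s))=[ks]/k$, then Taylor-expand---is exactly the route the paper takes. The gap is in how you propose to control the two remainder terms (the quadratic Taylor term and the Brownian increment). You invoke the ``preliminary single-machine rate $|Q_n^{(j)}(s)-(s^{-\gamma}-1)/\gamma|=O_P(s^{-v-1/2-\gamma}/\sqrt{k})$ coming from the classical Drees (1998) expansion.'' This is problematic on two counts. First, Drees' result is an $O_P$ statement for a \emph{fixed} sample; nothing in it gives uniformity of the implicit constants over $j\le m$ when $m\to\infty$, so $\max_{j\le m}$ of an $O_P(1)$ sequence need not be $O_P(1)$. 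Second, even if you had the rate uniformly in $j$, plugging the \emph{sup-norm} rate into the quadratic remainder is too crude: writing $t_n^{(j)}(s)=z(Q_n^{(j)}(s))$, the needed bound is $\sqrt{km}\max_j\sup_{s} s^{v+1/2}\bigl(t_n^{(j)}(s)/s-1\bigr)^2=o_P(1)$, whereas the Drees rate $|t_n^{(j)}(s)-s|=O_P(s^{1/2-v}/\sqrt{k})$ only yields $\sqrt{m/k}\cdot s^{-v-1/2}$, which at $s=k^{-1+\delta}$ has $k$-exponent $-\eta/(2(1+\eta))+(1-\delta)(v+1/2)>0$ for every admissible $(v,\delta)$. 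The same difficulty arises for the Brownian increment after the modulus of continuity, and the uniform-in-$j$ H\"older constants of $m$ independent Brownian motions are another loose end.

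The paper resolves this by abandoning the CLT-type rate altogether and proving two direct concentration lemmas for $t_n^{(j)}(s)/s-1$ (its Lemmas~\ref{Lemma: t_n,j uniform converge} and~\ref{Lemma: t_n(s) and s}) via a Bernstein-type inequality for $\sum_i I(X_i^{(j)}\ge x_n)$: for each grid point $s_i=i/k$ one obtains tail probabilities that are exponentially small in a positive power of $k$, so a union bound over $k$ grid points and $m$ machines still gives $o(1)$. This yields the sharper uniform bound $\sqrt{km}\max_j\sup_{k^{-1}\le s\le 1} s^{v+1/2}\bigl(t_n^{(j)}(s)/s-1\bigr)^2=o_P(1)$ and the companion statement $\max_j\sup_{s\ge k^{-1+\delta}}|t_n^{(j)}(s)/s-1|=o_P(1)$, which together dispose of the Taylor remainder, the Brownian-motion substitution, and the bias substitution. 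In short, the missing ingredient in your plan is a large-deviation (not merely $O_P$) control of $t_n^{(j)}(s)/s-1$, uniform over both $s$ and $j$.
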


For $\gamma>0$, a similar but simpler result is given as follows.
\begin{theorem}\label{Theorem: TQP DC gamma>0}
  Assume the same conditions as in Theorem \ref{Theorem: DC TEP gamma>0}. Then  for any $v\in ((2+\eta)^{-1},1/2)$ and $\delta \in (0,1)$,  as $N \to \infty$,
    $$
    \begin{aligned}
        \max_{1\le j\le m}\sup_{k^{-1+\delta}\le s\le 1} &s^{v+1/2+\gamma}\Big |\sqrt{km}\suit{\frac{X_{n-[ks],n}^{(j)}}{U(n/k)}-s^{-\gamma}}\\
        &-\sqrt{m}\gamma s^{-\gamma-1}W_n^{(j)}(s)-\gamma\sqrt{km}A_0\suit{\frac{n}{k}}s^{-\gamma}\frac{s^{-\rho}-1}{\rho}\Big |=o_P(1).  
    \end{aligned}
    $$
    Moreover, as $N\to\infty$,
    $$
    \begin{aligned}
       \sup_{k^{-1+\delta}\le s\le 1} s^{v+1/2+\gamma}\Big |&\sqrt{km}\suit{\frac{X_{N-[kms],N}}{U(n/k)}-s^{-\gamma}}\\
        &-\gamma s^{-\gamma-1}W_N(s)-\gamma\sqrt{km}A_0\suit{\frac{n}{k}}s^{-\gamma}\frac{s^{-\rho}-1}{\rho}\Big |=o_P(1).
    \end{aligned}
    $$
    Here, $\set{W_n^{(j)}, n\ge 1}, j=1,\dots,m$ are the same Brownian motions constructed as in Theorem \ref{Theorem: EP Divide and Conquer} and  $W_N = m^{-1/2}\sum_{j=1}^m W_n^{(j)}$. 
\end{theorem}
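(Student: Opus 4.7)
The plan is to derive Theorem \ref{Theorem: TQP DC gamma>0} from Theorem \ref{Theorem: DC TEP gamma>0} by ``inverting'' the tail empirical process expansion, mirroring how Theorem \ref{Theorem: TQP DC} is obtained from Proposition \ref{Theorem: EP single machine}. The advantage in the $\gamma>0$ regime is that Theorem \ref{Theorem: DC TEP gamma>0} is written directly in terms of $xU(n/k)$: after inversion one obtains a statement involving $X^{(j)}_{n-[ks],n}/U(n/k)$ without having to separately track the residual corrections in $a_0/U\to\gamma$ and $b_0/U\to 1$.

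Starting from Theorem \ref{Theorem: DC TEP gamma>0}, one has uniformly in $j\in\{1,\dots,m\}$ and $x\ge \tilde{x}_0$,
\[
\frac{n}{k}\bar{F}_n^{(j)}\!\left(xU(n/k)\right)=x^{-1/\gamma}+\frac{W_n^{(j)}(x^{-1/\gamma})}{\sqrt{k}}+A_0\!\left(\frac{n}{k}\right)\frac{x^{-1/\gamma}(x^{\rho/\gamma}-1)}{\gamma\rho}+R_{n,j}(x),
\]
with $\sup_{x\ge\tilde{x}_0}x^{(1/2-v)/\gamma}|R_{n,j}(x)|=o_P(1/\sqrt{km})$ uniformly in $j$. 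Setting $x=X^{(j)}_{n-[ks],n}/U(n/k)$ for $s\in[k^{-1+\delta},1]$ turns $xU(n/k)$ into the order statistic $X^{(j)}_{n-[ks],n}$, so the left-hand side equals $[ks]/k=s+O(1/k)$; conditions (C2) and (C3) ensure that this $O(1/k)$ discrepancy is absorbed by the right-hand side in the required weighted norm. The change of variables $y=x^{-1/\gamma}$ (so $x=y^{-\gamma}$ and $x^{\rho/\gamma}=y^{-\rho}$) then converts the identity into
\[
s=y+\frac{W_n^{(j)}(y)}{\sqrt{k}}+A_0\!\left(\frac{n}{k}\right)\frac{y(y^{-\rho}-1)}{\gamma\rho}+\tilde{R}_{n,j}(y),
\]
where $\sup_y y^{1/2-v}|\tilde{R}_{n,j}(y)|=o_P(1/\sqrt{km})$ uniformly in $j$. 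I would invert this implicit relation by iteration: a crude bound first yields $y=s(1+o_P(1))$ uniformly over $s\in[k^{-1+\delta},1]$ and $j$, after which the L\'evy modulus of $W_n^{(j)}$ together with Lipschitz continuity of the bias function in a neighbourhood of $s$ allow me to replace $W_n^{(j)}(y)$ and $A_0 y(y^{-\rho}-1)/(\gamma\rho)$ by the analogous quantities evaluated at $s$, each at the cost of an additional $o_P(s^{1/2-v}/\sqrt{km})$ term. Finally, the Taylor expansion $x=y^{-\gamma}=s^{-\gamma}\!\left(1+(y-s)/s\right)^{-\gamma}=s^{-\gamma}-\gamma s^{-\gamma-1}(y-s)+O\!\left(s^{-\gamma-2}(y-s)^2\right)$, combined with the solved expression for $y-s$, delivers the stated representation once the quadratic remainder is shown to be uniformly of smaller order than $s^{-v-1/2-\gamma}/\sqrt{km}$. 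The oracle statement follows by applying the same argument to the oracle half of Theorem \ref{Theorem: DC TEP gamma>0} with $(n,k,W_n^{(j)})$ replaced by $(N,l,W_N)$, where $N=nm$ and $l=km$.

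The main obstacle is uniform control near the lower endpoint $s=k^{-1+\delta}$, where the weight $s^{v+1/2+\gamma}$ must counterbalance simultaneously the $s^{-\gamma-1}$ growth of the Brownian term, the polynomial growth of the bias, and the quadratic Taylor remainder. Establishing $y/s\to 1$ uniformly over both $s\in[k^{-1+\delta},1]$ and $j\in\{1,\dots,m\}$ requires a L\'evy-modulus bound that holds uniformly across the $m$ independent Brownian motions, combined with condition (C1) for the bias contribution. Verifying that the quadratic Taylor remainder is uniformly of smaller order than $s^{-v-1/2-\gamma}/\sqrt{km}$ then reduces to a careful exponent arithmetic that exploits the admissible ranges of $v$ and $\delta$ together with conditions (C1)--(C3).
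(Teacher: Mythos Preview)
Your strategy is correct and matches the paper's approach: the paper does not give a separate proof of Theorem~\ref{Theorem: TQP DC gamma>0}, but its proof of Theorem~\ref{Theorem: TQP DC} proceeds exactly by substituting $x=Q_n^{(j)}(s)$ into the tail empirical process expansion, obtaining an implicit equation for $t_n^{(j)}(s)=z(Q_n^{(j)}(s))$ (your $y$), replacing $t_n^{(j)}(s)$ by $s$ in the Brownian and bias terms, and Taylor expanding. The $\gamma>0$ version is the same argument applied to Theorem~\ref{Theorem: DC TEP gamma>0}.

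The one place where your write-up is less sharp than the paper is the ``crude bound'' $y=s(1+o_P(1))$ uniformly in $j$ and $s\in[k^{-1+\delta},1]$, and the quadratic control needed for the Taylor remainder. You propose to extract these by iterating the implicit equation $s=y+W_n^{(j)}(y)/\sqrt{k}+\text{bias}(y)+\tilde R_{n,j}(y)$. That iteration is circular near $s=k^{-1+\delta}$: without already knowing $y\asymp s$, the bound $|W_n^{(j)}(y)|=O_P(1)$ only gives $(y-s)/s=O_P(1/(s\sqrt{k}))$, which diverges at the lower endpoint; the L\'evy-modulus refinement you invoke requires $y$ close to $s$ to begin with. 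The paper breaks this circularity by proving the ratio bounds \emph{directly} from Bernstein-type exponential inequalities on indicator sums (Lemmas~\ref{Lemma: t_n(s) and s} and~\ref{Lemma: t_n,j uniform converge}, together with Lemma~\ref{Lemma QP q in D} for the domain check), rather than from the implicit equation. In particular Lemma~\ref{Lemma: t_n,j uniform converge} gives $\sqrt{km}\max_j\sup_s s^{v+1/2}(t_n^{(j)}(s)/s-1)^2=o_P(1)$, which is precisely the uniform quadratic control your Taylor remainder needs. Once you import those lemmas (they carry over to the $\gamma>0$ setting without change, since $t_n^{(j)}(s)$ there is just $(X^{(j)}_{n-[ks],n}/U(n/k))^{-1/\gamma}$), the rest of your outline goes through as written.
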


 The following corollary, which is a direct consequence of  Theorem \ref{Theorem: TQP DC gamma>0} with applying the  Cram\'er's delta method,  can be used  for proving  asymptotic theory of the distributed Hill estimator.
\begin{corollary}
  Assume  the same conditions  as in Theorem \ref{Theorem: DC TEP gamma>0}. By the Cram\'er's delta method, we can obtain that, as $N\to\infty$,
  $$
  \begin{aligned}
       \max_{1\le j\le m}\sup_{k^{-1+\delta}\le s\le 1} s^{v+1/2}\Big |&\sqrt{km}\suit{\frac{\log X_{n-[ks],n}^{(j)}-\log U\suit{\frac{n}{k}}}{\gamma}+\log s}\\
       &-\sqrt{m}\gamma s^{-1}W_n^{(j)}(s)-\gamma\sqrt{km}A_0\suit{\frac{n}{k}}\frac{1}{\gamma}\frac{s^{-\rho}-1}{\rho}\Big |=o_P(1).  
   \end{aligned}
  $$
\end{corollary}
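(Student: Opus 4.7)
The plan is to deduce the corollary from Theorem \ref{Theorem: TQP DC gamma>0} by a Taylor expansion of the logarithm, i.e.\ a direct application of Cramér's delta method to the smooth transformation $g(y)=\gamma^{-1}\log y$ at $y=s^{-\gamma}$ (with derivative $g'(s^{-\gamma})=s^{\gamma}/\gamma$). Write $Y_n^{(j)}(s):=X_{n-[ks],n}^{(j)}/U(n/k)$ and set $\zeta_n^{(j)}(s):=s^{\gamma}Y_n^{(j)}(s)-1$, so that
$$
\frac{\log X_{n-[ks],n}^{(j)}-\log U(n/k)}{\gamma}+\log s = \frac{1}{\gamma}\log\bigl(1+\zeta_n^{(j)}(s)\bigr).
$$
Using the elementary bound $|\log(1+u)-u|\le u^2$ for $|u|\le 1/2$, I would split $\sqrt{km}$ times the above into a linear part $\gamma^{-1}\sqrt{km}\,\zeta_n^{(j)}(s)$ and a quadratic remainder, and handle the two separately.

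For the linear part, note that $\gamma^{-1}\sqrt{km}\,\zeta_n^{(j)}(s) = (s^{\gamma}/\gamma)\cdot\sqrt{km}\bigl(Y_n^{(j)}(s)-s^{-\gamma}\bigr)$. Multiplying the expansion furnished by Theorem \ref{Theorem: TQP DC gamma>0} by $s^{\gamma}/\gamma$ and collapsing the resulting powers ($s^{\gamma}\cdot s^{-\gamma-1}=s^{-1}$ and $s^{\gamma}\cdot s^{-\gamma}=1$) reproduces (up to an $o_P(1)$ error in the weighted sup norm, with the factor $s^{-\gamma}$ absorbed by changing the weight from $s^{v+1/2+\gamma}$ to $s^{v+1/2}$) exactly the Brownian and bias terms asserted in the corollary.

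The main obstacle is the quadratic remainder. For this I would first combine Theorem \ref{Theorem: TQP DC gamma>0} with the almost sure finiteness of $\sup_{0<s\le 1}s^{v-1/2}|W(s)|$ for $v\in(0,1/2)$ and a union bound over the $m$ independent Brownian motions $W_n^{(1)},\dots,W_n^{(m)}$ to obtain the uniform estimate
$$
\max_{1\le j\le m}\sup_{k^{-1+\delta}\le s\le 1} s^{v+1/2+\gamma}\sqrt{km}\bigl|Y_n^{(j)}(s)-s^{-\gamma}\bigr| = O_P(\sqrt{m\log m}),
$$
which, after multiplying by $s^{-\gamma}$ and using $s\ge k^{-1+\delta}$, yields $\max_{j,s}|\zeta_n^{(j)}(s)|=o_P(1)$ under conditions (C1)--(C3) (so that the Taylor expansion is legitimate). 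Factoring the weighted quadratic remainder as
$$
s^{v+1/2}\sqrt{km}\,\bigl|\zeta_n^{(j)}(s)\bigr|^{2} \;\le\; \Bigl(\max_{j,s}\bigl|\zeta_n^{(j)}(s)\bigr|\Bigr)\cdot s^{v+1/2}\sqrt{km}\,\bigl|\zeta_n^{(j)}(s)\bigr|,
$$
the first factor is $o_P(1)$ while the second (equal to $s^{v+1/2+\gamma}\sqrt{km}|Y_n^{(j)}(s)-s^{-\gamma}|$) is $O_P(\sqrt{m\log m})$ by the display above, so their product is $o_P(1)$ uniformly in $j$ and $s$. Assembling the linear term and the negligible remainder gives the expansion in the corollary.
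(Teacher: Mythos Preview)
Your approach is exactly the one the paper indicates (the corollary is stated as ``a direct consequence of Theorem~\ref{Theorem: TQP DC gamma>0} with applying the Cram\'er's delta method''), and your treatment of the linear part is correct: multiplying the expansion in Theorem~\ref{Theorem: TQP DC gamma>0} by $s^{\gamma}/\gamma$ and absorbing $s^{\gamma}$ into the weight gives the asserted Brownian and bias terms.

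The gap is in the quadratic remainder. Your final step asserts that
\[
\Bigl(\max_{j,s}\bigl|\zeta_n^{(j)}(s)\bigr|\Bigr)\cdot\Bigl(\max_{j,s} s^{v+1/2}\sqrt{km}\,\bigl|\zeta_n^{(j)}(s)\bigr|\Bigr)=o_P(1)\cdot O_P(\sqrt{m\log m})=o_P(1),
\]
but this conclusion is false in general because $m\to\infty$: an $o_P(1)$ factor multiplied by a sequence diverging like $\sqrt{m\log m}$ need not vanish. To make the product $o_P(1)$ you would need the sharper rate $\max_{j,s}|\zeta_n^{(j)}(s)|=o_P\bigl((m\log m)^{-1/2}\bigr)$, which does not follow from the bare statement of Theorem~\ref{Theorem: TQP DC gamma>0} together with the union bound you invoke; indeed, if you trace your own bound $|\zeta|\le O_P(\sqrt{\log m})\,k^{(1-\delta)(v+1/2)-1/2}$, you will see it only gives $o_P(1)$ when $\delta>2v/(2v+1)$, and the resulting product bound imposes an even more restrictive condition on $\delta$.

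The clean fix is to control $\sqrt{km}\,\max_{j}\sup_{s}s^{v+1/2}|\zeta_n^{(j)}(s)|^{2}$ directly, without passing through the expansion. This is precisely the role of Lemma~\ref{Lemma: t_n,j uniform converge} in the paper: for the general-$\gamma$ tail quantile process it establishes $\sqrt{km}\,\max_{j}\sup_{k^{-1}\le s\le 1}s^{v+1/2}(t_n^{(j)}(s)/s-1)^2=o_P(1)$ by a direct concentration argument (Bernstein-type bounds on binomial tails), and this is exactly how the quadratic Taylor remainder is dispatched in the proof of Theorem~\ref{Theorem: TQP DC}. The same device (or Lemma~\ref{Lemma: t_n,j uniform converge} itself, after relating $t_n^{(j)}(s)/s-1$ to $\zeta_n^{(j)}(s)$ via $a_0\sim\gamma U$ for $\gamma>0$) gives the needed $\sqrt{km}\,s^{v+1/2}|\zeta|^2=o_P(1)$ uniformly, for every $\delta\in(0,1)$, and then your decomposition goes through.
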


Theorem \ref{Theorem: TQP DC}  provides a useful tool for establishing the oracle property of some extreme value estimators based on the POT approach.  For example, using Theorem \ref{Theorem: TQP DC}, one can  immediately show that, the distributed Pickands estimator achieves the oracle property  since the distributed Pickands estimator is a functional of the tail quantile processes $Q_n^{(j)}(s)$ at three points $s=1,1/2$ and $1/4$. We leave this to the readers. Instead, we focus on 
 some other  estimators, for which  Theorem \ref{Theorem: TQP DC} alone may not be sufficient for proving their oracle property. We use the probability weighted moment (PWM) estimator as an example to explain the remaining issue.

The PWM  estimator in machine $j$ is defined as 
$$
\widehat{\gamma}_{PWM}^{(j)}:=  \frac{P_n^{(j)}-4Q_n^{(j)}}{P_n^{(j)}-2Q_n^{(j)}},
$$
where
$$
P_n^{(j)}:=\frac{1}{k}\sum_{i=1}^{k}X_{n-i+1,n}^{(j)}-X_{n-k,n}^{(j)}, \quad Q_n^{(j)}:=\frac{1}{k}\sum_{i=1}^{k}\frac{i-1}{k}\suit{X_{n-i+1,n}^{(j)}-X_{n-k,n}^{(j)}}.
$$
The distributed PWM estimator is defined as the average of the $m$ estimates from each machine:
$$
\widehat{\gamma}_{PWM}^{D}  = \frac{1}{m}\sum_{j=1}^m\widehat{\gamma}_{PWM}^{(j)}.
$$
To establish the asymptotic theory for $\widehat{\gamma}_{PWM}^{D}$, we need to handle the asymptotic expansion of $P_n^{(j)}$ and $Q_n^{(j)}$ for $j=1,\dots,m$ in a joint manner.
For $s \in [0,1]$, define 
\begin{equation}\label{Def:  f_n,j,s}
f_n^{(j)}(s) = Q_{n}^{(j)}(s)-\frac{s^{-\gamma}-1}{\gamma} - \frac{1}{\sqrt{k}}s^{-\gamma-1}W_n^{(j)}(s)-A_0\suit{\frac{n}{k}}\Psi(s^{-1}).
\end{equation}
Then we can write  $P_n^{(j)}$ as
  $$
    \begin{aligned}
    &\frac{P_n^{(j)}}{a_0\suit{n/k}} \\
        =& \int_{0}^1 \frac{X_{n-[ks],n}^{(j)}-X_{n-k,n}^{(j)}}{a_0\suit{\frac{n}{k}}} ds \\
        =&\int_{0}^{1} \frac{s^{-\gamma}-1}{\gamma}ds + \frac{\int_{0}^{1} \set{s^{-\gamma-1}W_n^{(j)}(s)-W_n^{(j)}(1)}ds }{\sqrt{k}}   
        + A_0\suit{\frac{n}{k}}\int_{0}^{1}\set{\Psi(s^{-1})-\Psi(1)}  ds \\
         &+\int_{k^{-1+\delta}}^{1}\set{f_n^{(j)}(s)-f_n^{(j)}(1)} ds +  \int_{0}^{k^{-1+\delta}} \set{f_n^{(j)}(s)-f_n^{(j)}(1)}ds\\
        =&: I_1+I_2+I_3+I_4+I_5.
    \end{aligned}    
$$

The three terms $I_1, I_2$ and $I_3$   can be handled in a similar way as  handling  analogous terms in  the oracle PWM estimator.
The  integral $I_4$  can be handled using Theorem \ref{Theorem: TQP DC}. However, handling the last integral $I_5$  requires some additional tools to deal with the ``corner'' of the tail quantile processes. Similarly, for $Q_n^{(j)}$, we need to handle a different integral in the ``corner'': 
$\int_{0}^{k^{-1+\delta}} s\set{f_n^{(j)}(s)-f_n^{(j)}(1)} ds.
$
To complete the toolkit for our purpose,  we provide a general result regrading the joint asymptotic behavior of  weighted integrals of the tail quantile processes in the corner area $[0,k^{-1+\delta}]$.

\begin{prop}\label{Theorem: Corner}
    Assume  the same conditions  as in Theorem \ref{Theorem: EP Divide and Conquer}. Assume that  a function $g$  defined on (0,1) satisfies $0<g(s)\le C s^{\beta}$ with $\beta>\gamma-\frac{\eta}{2(1+\eta)}+\frac{1}{1+\eta}\gamma I\set{\gamma>0}$. Then, there exists a sufficiently small constant $\delta>0$, such that, as $N \to \infty$,
    $$
    \begin{aligned}
        \sqrt{m}\max_{1\le j \le m}\int_{0}^{k^{-1+\delta}} g(s) \Big| & \sqrt{k}\suit{Q_{n}^{(j)}(s) - \frac{s^{-\gamma}-1}{\gamma} }\\
        &-s^{-\gamma-1}W_n^{(j)}(s)-\sqrt{k}A_0\suit{\frac{n}{k}}\Psi(s^{-1})\Big| ds =o_P(1).
    \end{aligned}
        $$
\end{prop}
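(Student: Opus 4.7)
The plan is to decompose $\sqrt{k}|f_n^{(j)}(s)|$ by the triangle inequality into three pieces---the quantile deviation $\sqrt{k}|Q_n^{(j)}(s)-(s^{-\gamma}-1)/\gamma|$, the Brownian piece $s^{-\gamma-1}|W_n^{(j)}(s)|$, and the bias piece $\sqrt{k}|A_0(n/k)\Psi(s^{-1})|$---and to bound the $g$-weighted integral of each separately. The bias piece is the easiest: (C1) gives $\sqrt{km}\,A_0(n/k)=O(1)$, while $|\Psi(s^{-1})|\le C(s^{-\gamma-\rho}+|\log s|)$, so the integral is $O(k^{(-1+\delta)(\beta-\gamma-\rho+1)}\log k)$. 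Multiplied by $\sqrt{m}$ this is $o(1)$ for small enough $\delta$, since $\rho<0$ and the hypothesis on $\beta$ comfortably implies $\beta>\gamma+\rho-1$.

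For the Brownian piece I would use a dyadic decomposition $[0,k^{-1+\delta}]=\bigcup_{i\ge 0}(2^{-(i+1)}k^{-1+\delta},2^{-i}k^{-1+\delta}]$, and on the $i$th interval bound $|W_n^{(j)}(s)|$ by $M_i^{(j)}:=\sup_{0\le u\le 2^{-i}k^{-1+\delta}}|W_n^{(j)}(u)|$. Since each $M_i^{(j)}$ is sub-Gaussian with scale $(2^{-i}k^{-1+\delta})^{1/2}$, a union bound over $j=1,\ldots,m$ gives $E\,\max_j M_i^{(j)}\le C(2^{-i}k^{-1+\delta})^{1/2}\sqrt{\log m}$. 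Summing the geometric series in $i$ and multiplying by $\sqrt{m}$, the total contribution is $O(\sqrt{m\log m}\,k^{(-1+\delta)(\beta-\gamma+1/2)})$. Condition (C2) yields $\sqrt{m}\le k^{1/(2(1+\eta))+o(1)}$, so the product is $o_P(1)$ for sufficiently small $\delta$ precisely when $\beta>\gamma-\eta/(2(1+\eta))$, which is implied by the hypothesis.

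For the quantile piece I would split the region as $[0,1/k]\cup[1/k,k^{-1+\delta}]$. On $[1/k,k^{-1+\delta}]$ the index $[ks]$ is at least $1$, so the R\'enyi-type representation $X_{n-[ks],n}^{(j)}=U(V_{n-[ks],n}^{(j)})$ together with $V_{n-[ks],n}^{(j)}k/n=s^{-1}(1+O_P([ks]^{-1/2}))$ and the second-order uniform expansion \eqref{Eq: Uniform Convergence of SOC} give a pointwise bound $\sqrt{k}|Q_n^{(j)}(s)-(s^{-\gamma}-1)/\gamma|\le Cs^{-\gamma-1/2}|\xi_n^{(j)}(s)|+\text{lower order}$, where $\xi_n^{(j)}$ is sub-Gaussian and can be controlled uniformly in $j$ at the cost of a $\sqrt{\log m}$ factor, producing the same type of bound as for the Brownian piece. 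On $[0,1/k]$, $[ks]=0$ forces $Q_n^{(j)}(s)=Q_n^{(j)}(0)=(X_{n,n}^{(j)}-b_0(n/k))/a_0(n/k)$, constant in $s$. Observing that $\max_j X_{n,n}^{(j)}=X_{N,N}$, I would separate cases in $\gamma$: for $\gamma\le 0$ this quantity is $O_P(1)$ (with at most a $\log n$ factor when $\gamma=0$), so the contribution is easily absorbed; for $\gamma>0$ classical extreme-value theory gives $X_{N,N}/a_0(n/k)=O_P((km)^{\gamma})$. Combined with $\int_0^{1/k}g(s)\,ds=O(k^{-\beta-1})$ and the $\sqrt{m}\sqrt{k}$ scaling, the contribution is $O_P(k^{\gamma-\beta-1/2}m^{\gamma+1/2})$; by (C2), $m^{\gamma+1/2}\le k^{(\gamma+1/2)/(1+\eta)+o(1)}$, and this is $o(1)$ precisely when $\beta>\gamma-\eta/(2(1+\eta))+\gamma/(1+\eta)$, matching the hypothesis.

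The hardest step will be the R\'enyi-based bound on $[1/k,k^{-1+\delta}]$: making it uniform in $j$ with the stated $\sqrt{\log m}$ dependence on the cross-machine maximum requires a careful sub-Gaussian tail bound for the joint fluctuations of the Pareto order statistics $V_{n-[ks],n}^{(j)}$ across all machines, in the spirit of the argument underlying Proposition~\ref{Theorem: EP single machine} and Theorem~\ref{Theorem: TQP DC}, but now carried through with the polynomial weight $g(s)\le Cs^\beta$ and the restricted range $s\le k^{-1+\delta}$. The contribution from $[0,1/k]$ when $\gamma>0$ is the reason the hypothesis carries the extra $\gamma/(1+\eta)$ term: on this \emph{corner of the corner} the global maximum $X_{N,N}$ is inflated by a factor of $m^{\gamma}$ relative to a single-machine maximum, forcing a stricter condition on the weight $g$.
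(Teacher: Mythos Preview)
Your triangle-inequality decomposition into quantile, Brownian, and bias pieces matches the paper exactly, and your treatment of the bias piece is the same. For the Brownian piece you use a dyadic sub-Gaussian argument; the paper instead invokes the a.s.\ local bound $|W_n^{(j)}(s)|\le s^{1/2-\varepsilon}$ for small $s$, which gives $\sqrt{m}\int_0^{k^{-1+\delta}} g(s)\,s^{-\gamma-1/2-\varepsilon}\,ds=o(1)$ directly under the hypothesis on $\beta$. Both routes are fine.

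The substantive difference is the quantile piece. You split $[0,1/k]\cup[1/k,k^{-1+\delta}]$ and propose a R\'enyi/sub-Gaussian fluctuation argument on the second interval---which you correctly flag as the hardest step and leave incomplete. The paper sidesteps this entirely with a \emph{crude monotonicity bound over the whole of} $[0,k^{-1+\delta}]$. Writing $X_{n-[ks],n}^{(j)}=U(Y_{n-[ks],n}^{(j)})$ with Pareto(1) $Y$'s, the uniform second-order expansion \eqref{Eq: Uniform Convergence of SOC} reduces the problem to controlling $\int_0^{k^{-1+\delta}} g(s)\,(kY_{n-[ks],n}^{(j)}/n)^{\gamma}\,ds$ (plus a harmless $\int g(s)s^{-\gamma}\,ds$). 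For $\gamma>0$ one simply uses $Y_{n-[ks],n}^{(j)}\le Y_{N,N}$ and $(kY_{N,N}/n)^{\gamma}=O_P((km)^{\gamma})$, pulling the random factor out of the integral; for $\gamma<0$ one uses $Y_{n-[ks],n}^{(j)}\ge Y_{n-[k^{\delta}],n}^{(j)}$ together with a Bernstein-type bound showing $kY_{n-[k^{\delta}],n}^{(j)}/n\ge \tfrac12 k^{1-\delta}$ uniformly in $j$ with high probability; the case $\gamma=0$ follows the positive case with $\log$ in place of the power. This crude bound already delivers the full condition $\beta>\gamma-\eta/(2(1+\eta))+\gamma I\{\gamma>0\}/(1+\eta)$, so your finer treatment of $[1/k,k^{-1+\delta}]$ buys nothing: the piece $[0,1/k]$ that you handle by the same crude bound is what drives the constraint in either approach. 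In short, your plan is sound but over-engineered---the paper's argument is considerably shorter and avoids precisely the uniform sub-Gaussian control you identify as difficult.
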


The oracle property of most extreme value estimators, including the PWM estimator,  in the extreme value statistics based on the POT approach  can be established by applying Theorem \ref{Theorem: TQP DC} and Proposition \ref{Theorem: Corner} together.
We demonstrate a few examples in Section \ref{Section:example}.

\section{Application}\label{Section:example}

\subsection{Distributed inference for the Hill estimator}
In this subsection, we apply the approximations of the tail empirical processes based on the distributed subsamples to establish the oracle property of the distributed Hill estimator.
The Hill estimator in machine $j$ is defined as 
$$
\widehat{\gamma}_H^{(j)}: = \frac{1}{k}\sum_{i=1}^{k} \log X_{n-i+1,n}^{(j)} - \log X_{n-k,n}^{(j)}, \quad j=1,\dots,m.
$$
The distributed Hill estimator is defined as the average of the $m$ estimates from each machine:
$\widehat{\gamma}_H^{D}:=\frac{1}{m} \sum_{j=1}^m\widehat{\gamma}_H^{(j)}$.
And the oracle Hill estimator using the top $l = km$ exceedance ratios is 
$$
\widehat{\gamma}_H^{Oracle}:=\frac{1}{km}\sum_{i=1}^{km} \log X_{N-i+1,N} - \log X_{N-km,N}.
$$
\begin{corollary}\label{Corrolary: Hill}
    Suppose that the distribution function $F$ satisfies the second order condition \eqref{Def:SOC:Uni} with $\gamma>0$ and   $\rho<0$. 
    Let $m,n,k$ be sequences of real numbers that satisfy 
      conditions \ref{Condition:u:k1}-\ref{Condition:u:k3}. Then,  the distributed Hill estimator achieves the oracle property, i.e. $\sqrt{km}\suit{\widehat{\gamma}_{H}^{D} - \widehat{\gamma}_H^{Oracle}} = o_P(1)$, as $N\to\infty$. 
\end{corollary}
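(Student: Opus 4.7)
The plan is to exhibit, for both $\hat{\gamma}_H^{DC}$ and $\hat{\gamma}_H^{Oracle}$, a common stochastic expansion at rate $1/\sqrt{km}$ in terms of the Brownian motion $W_N = m^{-1/2}\sum_{j=1}^m W_n^{(j)}$ furnished by Theorem \ref{Theorem: DC TEP gamma>0}, and then subtract. Starting from the classical identity
\[
\hat{\gamma}_H^{(j)} = \int_{1}^{\infty} \frac{n}{k}\bar{F}_n^{(j)}\suit{u\,X_{n-k,n}^{(j)}}\,\frac{du}{u},
\]
I would perform the change of variable $y = uX_{n-k,n}^{(j)}/U(n/k)$ to rewrite
\[
\hat{\gamma}_H^{(j)} = \int_{T_n^{(j)}}^{\infty} \frac{n}{k}\bar{F}_n^{(j)}\suit{yU(n/k)}\,\frac{dy}{y}, \qquad T_n^{(j)} := \frac{X_{n-k,n}^{(j)}}{U(n/k)},
\]
and analogously for the oracle estimator with $\bar{F}_N$ in place of $\bar{F}_n^{(j)}$ and the stochastic lower limit $T_N := X_{N-km,N}/U(n/k)$.

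The next step is to substitute the uniform expansion delivered by Theorem \ref{Theorem: DC TEP gamma>0}: for every $j$ and all $y\ge \tilde{x}_0$,
\[
\frac{n}{k}\bar{F}_n^{(j)}\suit{yU(n/k)} = y^{-1/\gamma} + \frac{1}{\sqrt{km}}\sqrt{m}\,W_n^{(j)}\suit{y^{-1/\gamma}} + A_0(n/k)\,y^{-1/\gamma}\frac{y^{\rho/\gamma}-1}{\gamma\rho} + r_n^{(j)}(y),
\]
with $\max_{1\le j\le m}\sup_{y\ge \tilde{x}_0} y^{(1/2-v)/\gamma}\sqrt{km}\,|r_n^{(j)}(y)| = o_P(1)$. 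The leading term integrates to $\gamma(T_n^{(j)})^{-1/\gamma}$, and a Taylor expansion around $T_n^{(j)}=1$ (using that $\sqrt{km}(T_n^{(j)}-1) = O_P(1)$ uniformly in $j$, which follows from inverting the same tail empirical expansion at level $1$) isolates the centering $\gamma$ plus an $O_P(1/\sqrt{km})$ stochastic correction that combines additively with the integrated Brownian and bias contributions. Averaging over $j=1,\dots,m$ and invoking the identity $m^{-1}\sum_{j=1}^m \sqrt{m}\,W_n^{(j)}(\cdot) = W_N(\cdot)$ aligns the Brownian component of the DC expansion with the oracle expansion obtained by applying the second assertion of Theorem \ref{Theorem: DC TEP gamma>0} directly to the oracle sample; the deterministic bias components coincide by inspection.

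The main obstacle is uniform-in-$j$ control of the remainder integrated against $dy/y$ over the random lower limit $T_n^{(j)}$. The crucial point is that the weight $y^{(1/2-v)/\gamma}$ in Theorem \ref{Theorem: DC TEP gamma>0} makes $\int_{\tilde{x}_0}^{\infty} y^{-(1/2-v)/\gamma}\,dy/y$ converge at infinity for any $v<1/2$ (using $\gamma>0$), while Condition (C2) guarantees that the admissible range $v\in((2+\eta)^{-1},1/2)$ is nonempty; hence the weighted remainder integrates to $o_P(1/\sqrt{km})$ uniformly across the $m$ machines. The random boundary $T_n^{(j)}$ is controlled uniformly by the same maximal expansion evaluated at $y=1$, which delivers the required $O_P(1/\sqrt{km})$ bound. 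Putting these pieces together, the DC and oracle expansions differ by $o_P(1/\sqrt{km})$, which upon multiplying by $\sqrt{km}$ yields $\sqrt{km}(\hat{\gamma}_H^{DC} - \hat{\gamma}_H^{Oracle}) = o_P(1)$ as claimed.
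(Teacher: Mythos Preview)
Your strategy---the integral representation of the Hill estimator combined with the tail empirical expansion of Theorem~\ref{Theorem: DC TEP gamma>0}---is exactly the paper's route. However, your intermediate claim that $\sqrt{km}\,(T_n^{(j)}-1)=O_P(1)$ \emph{uniformly in $j$} is false, and this matters for the Taylor remainder. From Theorem~\ref{Theorem: TQP DC gamma>0} at $s=1$ (equivalently, inverting the tail empirical process) one gets
\[
\sqrt{k}\,(T_n^{(j)}-1)=\gamma\,W_n^{(j)}(1)+o_P(m^{-1/2})\quad\text{uniformly in }j,
\]
so $\sqrt{km}\,(T_n^{(j)}-1)=\gamma\sqrt{m}\,W_n^{(j)}(1)+o_P(1)$, and $\sqrt{m}\,W_n^{(j)}(1)$ has variance $m\to\infty$; it is certainly not $O_P(1)$. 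Consequently the per-machine Taylor remainder in $\gamma\,(T_n^{(j)})^{-1/\gamma}$ is $O_P(1/k)$, not $o_P(1/\sqrt{km})$. What saves the argument is averaging: $\sqrt{km}\cdot m^{-1}\sum_j(T_n^{(j)}-1)^2=O_P(\sqrt{m/k})=o_P(1)$ by condition~(C2). The same caveat applies to replacing the random lower limits $T_n^{(j)}$ by $1$ in the Brownian and bias integrals: these perturbations are only small after averaging, not machine by machine at the $\sqrt{km}$ scale. You should rewrite the argument so that the first-order Taylor correction $-(T_n^{(j)}-1)$ is kept exact, matched to $-\gamma W_n^{(j)}(1)/\sqrt{k}$ via the display above, and only then averaged to produce $-\gamma W_N(1)/\sqrt{km}$.

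One further simplification the paper exploits and you do not: because $\bar F_N=m^{-1}\sum_{j}\bar F_n^{(j)}$, the integral $\int_{1}^{\infty}\bigl[\tfrac{n}{k}\bar F_n^{(j)}(yU(n/k))-y^{-1/\gamma}\bigr]\,\tfrac{dy}{y}$, once averaged over $j$, is \emph{literally equal} to the corresponding oracle integral. Thus the main stochastic contribution cancels exactly in $\hat\gamma_H^{DC}-\hat\gamma_H^{Oracle}$ without invoking any Brownian approximation on $[1,\infty)$; only the short random pieces near the thresholds need the machinery of Theorem~\ref{Theorem: DC TEP gamma>0}. This shortcut is not essential, but it streamlines the proof considerably.
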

\begin{proof}[Proof of Corollary \ref{Corrolary: Hill}]
  By applying the same techniques used in proving the asymptotic normality of the oracle Hill estimator (cf. Example 5.1.5 in \cite{de2006extreme}), we have that, as $N\to\infty$,
  $$
  \begin{aligned}
   \widehat{\gamma}_{H}^{D} -\gamma 
   =&\frac{1}{m}\sum_{j=1}^m \int_{X_{n-k,n}^{(j)}/U(n/k)}^{1} s^{-1/\gamma} \frac{ds}{s} \\
  &\quad + \frac{1}{m}\sum_{j=1}^m \int_{X_{n-k,n}^{(j)}/U(n/k)}^{1} \msuit{\frac{n}{k}\set{1-F_n^{(j)}\suit{sU\suit{\frac{n}{k}}}}-s^{-1/\gamma}}\frac{ds}{s} \\
  &\quad +\frac{1}{m}\sum_{j=1}^m \int_{1}^{\infty} \msuit{\frac{n}{k}\set{1-F_n^{(j)}\suit{sU\suit{\frac{n}{k}}}}-s^{-1/\gamma}}\frac{ds}{s}\\
  =&:I_1+I_2+I_3.
  \end{aligned}
  $$
  
  For  $I_1$, note that, as $N\to\infty$,   
  $$
  \begin{aligned}
    I_1 &= \frac{1}{m}\sum_{j=1}^m \set{\gamma\suit{X_{n-k,n}^{(j)}/U(n/k)}^{-1/\gamma} -\gamma}. \\
  \end{aligned}
  $$
  By taking $s=1$ in Theorem 4, we get that, as $N\to\infty$,
  \begin{equation}\label{Eq: uniform for X_{n-k,n}}
    \sqrt{m}\max_{1\le j \le m}\abs{\sqrt{k}\suit{\frac{X_{n-k,n}^{(j)}}{U(n/k)}-1}-\gamma W_n^{(j)}(1)}=o_P(1).
  \end{equation}
  Thus, as $N\to\infty$,
  $$
  I_1 = -\gamma (km)^{-1/2} \frac{1}{\sqrt{m}}\sum_{j=1}^m W_n^{(j)}(1) +(km)^{-1/2}o_P(1).
  $$
  
  For $I_2$, the uniform convergence in \eqref{Eq: uniform for X_{n-k,n}} and  Theorem \ref{Theorem: EP Divide and Conquer} imply that as $N\to\infty$, $I_2=(km)^{-1/2}o_P(1)$.
  
  For $I_3$, since $F_N=m^{-1} \sum_{j=1}^m F_n^{(j)}$, we obtain that,
  $$
  I_3 =  \int_{1}^{\infty} \msuit{\frac{n}{k}\set{1-F_N\suit{sU\suit{\frac{n}{k}}}}-s^{-1/\gamma}}\frac{ds}{s}.
  $$
  
  We can handle $\widehat{\gamma}_{H}^{Oracle}$ in a similar way and get that, 
  $$
  \widehat{\gamma}_H^{Oracle} -\gamma= -\gamma\frac{1}{\sqrt{km}} W_N(1)+(km)^{-1/2}o_P(1)+I_3.
  $$
  The Corollary is proved by noting that  $W_N = m^{-1/2}\sum_{j=1}^m W_n^{(j)}$.  
\end{proof}

\begin{remark}
  \cite{chen2021distributed} only shows that the limiting distribution of the distributed Hill estimator coincides with that of the oracle Hill  estimator, but  does not investigate  the difference between the two estimators.
\end{remark}

\subsection{Distributed inference for the PWM estimator}
In this subsection,  we take the distributed PWM estimator as an example to show how to apply   Theorem \ref{Theorem: TQP DC} and  Proposition \ref{Theorem: Corner} to establish its oracle property.
The oracle PWM estimator is defined as 
$$
\widehat{\gamma}_{PWM}^{Oracle}: = \frac{P_N -4Q_N}{P_N-2Q_N},
$$
where $P_N$ and $Q_N$ are  counterparts of $P_n^{(j)}$ and $Q_n^{(j)}$ based on the oracle sample, respectively.

\begin{corollary}\label{Corollary: PWM}
Suppose that the distribution function $F$ satisfies the second order condition \eqref{Def:SOC:Uni} with $\gamma<1/2$ and $\rho<0$. Assume that conditions \ref{Condition:u:k1}-\ref{Condition:u:k3}  hold with  $\eta > \max\set{0,\frac{2\gamma }{1/2-\gamma}}$. Then, the distributed PWM estimator achieves the oracle property, i.e., $\sqrt{km}\suit{\widehat{\gamma}_{PWM}^{D}-\widehat{\gamma}_{PWM}^{Oracle}} = o_P(1)$ as $N\to\infty$.
\end{corollary}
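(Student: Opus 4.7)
The plan is to derive a joint asymptotic expansion of $\suit{P_n^{(j)}, Q_n^{(j)}}$ that is uniform across the $m$ machines at the fast rate $o_P\suit{(km)^{-1/2}}$, apply the Cram\'er delta method to the smooth map $g(p,q)=(p-4q)/(p-2q)$, and compare the average over $j$ to its oracle counterpart. Starting from the decomposition $P_n^{(j)}/a_0(n/k) = I_1+I_2^{(j)}+I_3+I_4^{(j)}+I_5^{(j)}$ displayed in the excerpt together with its analogue for $Q_n^{(j)}/a_0(n/k)$ obtained by inserting a factor $s$ into every integrand (up to an $O(1/k)$ Riemann-sum error that is $o_P((km)^{-1/2})$ under (C2)), the deterministic term $I_1$ and the bias term $I_3$ coincide with their oracle counterparts, while $I_2^{(j)}$ is a continuous linear functional of $W_n^{(j)}$; using $W_N=m^{-1/2}\sum_{j=1}^m W_n^{(j)}$ then makes $m^{-1}\sum_{j=1}^m I_2^{(j)}$ equal to the corresponding oracle stochastic term at scale $(km)^{-1/2}$.

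The crux is to establish $\max_{1\le j\le m}\set{|I_4^{(j)}|+|I_5^{(j)}|}=o_P\suit{(km)^{-1/2}}$ in both the $P$ and $Q$ decompositions. For $s\in[k^{-1+\delta},1]$ I would invoke Theorem \ref{Theorem: TQP DC}: choosing $v$ small enough so that $v+\gamma<1/2$ (feasible since $\gamma<1/2$, and compatible with the lower bound $(2+\eta)^{-1}$ because the hypothesis $\eta>2\gamma/(1/2-\gamma)$ for $\gamma>0$ is precisely what guarantees $1/2-\gamma>(2+\eta)^{-1}$) renders $s^{-v-1/2-\gamma}$ integrable and yields a uniform bound on $I_4^{(j)}$, and the subtracted constant $f_n^{(j)}(1)$ is itself $o_P\suit{(km)^{-1/2}}$ uniformly in $j$ by the same theorem at $s=1$. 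For the corner integral $I_5^{(j)}$ I would apply Proposition \ref{Theorem: Corner} with $g(s)=1$ in the $P$ decomposition (so $\beta=0$) and $g(s)=s$ in the $Q$ decomposition (so $\beta=1$). The requirement $\beta>\gamma-\eta/[2(1+\eta)]+\gamma I\set{\gamma>0}/(1+\eta)$ with $\beta=0$ rearranges to exactly $\eta>\max\set{0,\,2\gamma/(1/2-\gamma)}$, which is the strengthened hypothesis imposed in Corollary \ref{Corollary: PWM}; with $\beta=1$ the condition is weaker and holds for any $\gamma<1/2$.

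Once these uniform bounds are in place, a Taylor expansion of $g$ around the limit $(p_0,q_0)=\suit{1/(1-\gamma),\,1/[2(2-\gamma)]}$ gives, for every $j$,
$$
\hat{\gamma}_{PWM}^{(j)} = \gamma + g_p\suit{P_n^{(j)}/a_0-p_0} + g_q\suit{Q_n^{(j)}/a_0-q_0} + R^{(j)},
$$
which is valid since $p_0-2q_0 = 1/[(1-\gamma)(2-\gamma)]\ne 0$ and $|R^{(j)}|\le C\suit{|P_n^{(j)}/a_0-p_0|^2+|Q_n^{(j)}/a_0-q_0|^2}$. Each squared deviation has mean of order $k^{-1}$, so $m^{-1}\sum_{j=1}^m R^{(j)} = O_P(k^{-1})$; subtracting the analogous Taylor expansion of $\hat{\gamma}_{PWM}^{Oracle}$, whose remainder is $O_P((km)^{-1})$, and combining with the steps above yields
$$
\sqrt{km}\suit{\hat{\gamma}_{PWM}^{DC}-\hat{\gamma}_{PWM}^{Oracle}} = o_P(1)+O_P\suit{\sqrt{m/k}},
$$
which is $o_P(1)$ because (C2) forces $m=o(k)$. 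The main obstacle will be the uniform control of the corner term $I_5^{(j)}$ at the $o_P((km)^{-1/2})$ rate; it is precisely the extra hypothesis on $\eta$ in Corollary \ref{Corollary: PWM} that tunes the weight exponent $\beta$ in Proposition \ref{Theorem: Corner} so that this bound is available when $\gamma>0$.
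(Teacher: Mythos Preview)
Your proposal is correct and follows essentially the same route as the paper: both arguments reduce to controlling $\max_j\int_0^1|f_n^{(j)}(s)|\,ds$ and $\max_j\int_0^1 s|f_n^{(j)}(s)|\,ds$ at rate $o_P((km)^{-1/2})$ by splitting at $k^{-1+\delta}$, applying Theorem~\ref{Theorem: TQP DC} with $v\in((2+\eta)^{-1},1/2-\gamma)$ on the bulk and Proposition~\ref{Theorem: Corner} with $g(s)=1$ and $g(s)=s$ on the corner, and then matching the leading stochastic terms via $W_N=m^{-1/2}\sum_j W_n^{(j)}$. The only cosmetic difference is that the paper packages the delta method into a single linear operator $L(f)=(1-\gamma)(2-\gamma)\int_0^1\{(1-4s)-\gamma(1-2s)\}\{f(s)-f(1)\}\,ds$ and folds the quadratic remainder into ``similar techniques,'' whereas you spell out the Taylor expansion of $g(p,q)=(p-4q)/(p-2q)$ and bound the averaged second-order term by $O_P(\sqrt{m/k})=o_P(1)$ explicitly.
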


\begin{proof}[Proof of Corollary \ref{Corollary: PWM}]
For a continuous function $f: [0,1]\to \mathbb{R}$, define an operator  
$$
L(f) = (1-\gamma)(2-\gamma) \int_{0}^1 \set{(1-4s)-\gamma(1-2s)} \set{f(s)-f(1)}ds.
$$
It is obvious that $L$ is a linear operator.

Note that, for  the oracle PWM estimator  using top $km$
exceedances, we have that, as $N\to\infty$,
$$
\begin{aligned}
    \widehat{\gamma}_{PWM}^{Oracle} -\gamma &= \frac{1}{\sqrt{km}} L\suit{s^{-\gamma-1}W_N(s)} +A_0\suit{\frac{n}{k}}L\suit{\Psi(s^{-1})} +\frac{1}{\sqrt{km}}o_P(1),
\end{aligned}
$$
see e.g. Section 3.6.1 in \cite{de2006extreme}. 
By using  similar techniques, we obtain that, as $N\to\infty$, 
$$
\begin{aligned}
&\frac{1}{m}\sum_{j=1}^m \widehat{\gamma}_{PMW}^{(j)} - \gamma \\
&= \frac{1}{m}\sum_{j=1}^m \frac{1}{\sqrt{k}} L\suit{s^{-\gamma-1}W_n^{(j)}(s)}+A_0\suit{\frac{n}{k}}L\suit{\Psi(s^{-1})} 
 \\
&\quad + O_P(1) \max_{1\le j\le m}\int_{0}^1 \abs{f_n^{(j)}(s)} ds + O_P(1) \max_{1\le j\le m}\int_{0}^1 s\abs{f_n^{(j)}(s)} ds.\\
\end{aligned}
$$

Recall that $L$ is a linear operator and $W_N = m^{-1/2}\sum_{j=1}^m W_n^{(j)}$, we get that
$$
\frac{1}{\sqrt{km}} L\suit{s^{-\gamma-1} W_N(s)} =\frac{1}{m} \sum_{j=1}^m \frac{1}{\sqrt{k}} L\suit{s^{-\gamma-1}W_n^{(j)}(s)}.
$$
The Corollary is proved provided that,  as $N\to\infty$, $I_1:= \max_{1\le j\le m}\int_{0}^1 \abs{f_n^{(j)}(s)} ds= (km)^{-1/2}o_P(1)$ and $I_2:=\max_{1\le j\le m}\int_{0}^1 s\abs{f_n^{(j)}(s)} ds =(km)^{-1/2} o_P(1)$.

For handling $I_1$, we  divide $[0,1]$ into
 $[k^{-1+\delta},1]$ and $[0,k^{-1+\delta}]$.  Thus,
 $$
 \begin{aligned}
    I_1 &\le \max_{1\le j\le m}\int_{0}^{k^{-1+\delta}} \abs{f_n^{(j)}(s)} ds+\max_{1\le j\le m}\int_{k^{-1+\delta}}^{1} \abs{f_n^{(j)}(s)} ds     \\
    :&=I_{1,1}+I_{1,2}.
 \end{aligned}
 $$
 We first handle $I_{1,2}$. Note that, for $\eta>\frac{2\gamma}{1/2-\gamma}$, we can always find a $v>\frac{1}{2+\eta}$ such that $v+\gamma<1/2$.
 Then, by Theorem \ref{Theorem: TQP DC}, as $N\to\infty$,
  $$
    I_{1,2} =o_P(1) (km)^{-1/2} \int_{k^{-1+\delta}}^{1} s^{-v-1/2-\gamma} ds =(km)^{-1/2}o_P(1).   
$$ 
 The term $I_{1,1}$ can be handled by Proposition \ref{Theorem: Corner} as follows. Choose $g(s)=1$. 
  Since $\gamma<1/2$ and $\eta >\max\set{0, \frac{2\gamma }{1/2-\gamma}}$, the conditions in Proposition \ref{Theorem: Corner} hold. The
  proposition yields that $I_{1,1}=(km)^{-1/2}o_P(1)$.
Hence, we obtain $ I_1=(km)^{-1/2}o_P(1)$ as $N\to \infty$. 
The term $I_2$ can be handled in a similar way  with choosing $g(s)=s$.
\end{proof}

\subsection{Distributed inference for the MLE}

 The MLE  for the extreme value index  and the scale parameter  based on the sample on machine $j$ $(\gamma_{mle}^{(j)}, \sigma_{mle}^{(j)})$, is defined as the solution of the following equations:
\begin{equation}\label{mle}
  \begin{aligned}
    \frac{1}{k}\sum_{i=1}^k &\frac{1}{\gamma^2} \log \suit{1+\frac{\gamma}{\sigma}\suit{X_{n-i+1,n}^{(j)} - X_{n-k,n}^{(j)}}} \\
    &-\suit{\frac{1}{\gamma}+1} \frac{(1/\sigma)\suit{X_{n-i+1,n}^{(j)}-X_{n-k,n}^{(j)}}}{1+(\gamma/\sigma)\suit{X_{n-i+1,n}^{(j)}-X_{n-k,n}^{(j)}}} = 0, \\
    & \sum_{i=1}^k \suit{\frac{1}{\gamma}+1} \frac{(\gamma/\sigma)\suit{X_{n-i+1,n}^{(j)}-X_{n-k,n}^{(j)}}}{1+(\gamma/\sigma)\suit{X_{n-i+1,n}^{(j)}-X_{n-k,n}^{(j)}}} = k.
  \end{aligned}
\end{equation}
The distributed MLE for the extreme value index and the scale parameter are defined as 
$$
\begin{aligned}
  \widehat{\gamma}_{mle}^{D}  =\frac{1}{m}\sum_{j=1}^m\widehat{\gamma}_{mle}^{(j)}, \quad 
  \widehat{\sigma}_{mle}^{D}  =\frac{1}{m}\sum_{j=1}^m\widehat{\sigma}_{mle}^{(j)}.
\end{aligned}
$$
The oracle MLE for the extreme value index and the scale parameter  $ (\widehat{\gamma}_{mle}^{Oracle}, \widehat{\sigma}_{mle}^{Oracle})$ are defined in a similar way by using the oracle sample.
\begin{corollary}\label{corollary:mle}
  Suppose that the distribution function F satisfies the second order
condition \eqref{Def:SOC:Uni} with $\gamma>-1/2$ and $\rho<0$. Assume that conditions \ref{Condition:u:k1}-\ref{Condition:u:k3} hold with $\eta>\max\suit{0,2\gamma,\frac{-2\gamma}{1+2\gamma}}$  . 
 Then, the distributed MLE for the extreme value index and the scale parameter achieve the oracle property, i.e., as $N\to\infty$,
$$
\begin{aligned}
  \sqrt{km}\suit{\widehat{\gamma}_{mle}^{D}  - \widehat{\gamma}_{mle}^{Oracle}}&=o_P(1).\\
  \sqrt{km} \ \frac{\widehat{\sigma}_{mle}^{D}  - \widehat{\sigma}_{mle}^{Oracle}}{a(n/k)}&=o_P(1).
\end{aligned}
$$
\end{corollary}

The proof is deferred to the Supplementary Material.
 Solving the likelihood equations  \eqref{mle} involves an optimization algorithm. The computation cost can be high when implementing an optimization algorithm for the oracle sample. We provide a simulation study to compare the computation cost of the oracle MLE and the distributed MLE in the Supplementary Material.

\subsection{Distributed inference for the  high quantile, endpoint and tail probability}
In this subsection, we show how to establish the oracle property of the estimators for the high quantile, endpoint and tail probability. In order to estimate these quantities, we need to estimate the extreme value index $\gamma$, the scale parameter $a(n/k)$ and the location parameter $b(n/k)$, see e.g. \citet[Chapter 4]{de2006extreme}. 
We focus on the PWM estimators for $\gamma$ and $a(n/k)$ as an example. Other estimators based on the POT approach can be treated in a similar way.

Based on the oracle sample, since $N/(km) = n/k$, one can estimate  $a(n/k)$ and  $b(n/k)$ as 
$$
\widehat{a}^{Oracle}\suit{\frac{n}{k}} = \frac{2P_NQ_N}{P_N-2Q_N}, \quad \widehat{b}^{Oracle}\suit{\frac{n}{k}} = X_{N-[km],N},
$$
see e.g. \cite{hosking1987parameter}.

We apply the DC algorithm to estimate $a(n/k)$ and  $b(n/k)$ based on distributed subsamples.
Define the distributed scale estimator as  
$$
\widehat{a}^{D}\suit{\frac{n}{k}}:=\frac{1}{m}\sum_{j=1}^m \widehat{a}^{(j)}\suit{\frac{n}{k}}=\frac{1}{m}\sum_{j=1}^m \frac{2P_n^{(j)}Q_n^{(j)}}{P_n^{(j)}-2Q_n^{(j)}},
$$
and the distributed location estimator as 
$$
\widehat{b}^{D}\suit{\frac{n}{k}} = \frac{1}{m}\sum_{j=1}^m X_{n-k,n}^{(j)}. 
$$
Following similar steps as in proving the oracle property of $\widehat{\gamma}_{PWM}^{D}$, we can show that, as $N\to\infty$,
$$
\sqrt{km}\ \frac{\widehat{a}^{D}\suit{\frac{n}{k}} -\widehat{a}^{Oracle}\suit{\frac{n}{k}}}{a\suit{\frac{n}{k}}}=o_P(1),
\quad 
\sqrt{km}\ \frac{\widehat{b}^{D}\suit{\frac{n}{k}}-\widehat{b}^{Oracle}\suit{\frac{n}{k}}}{a\suit{\frac{n}{k}}} = o_P(1).
$$
 
\subsubsection{High quantile}

Let $x\suit{p_N}:= U(1/p_N)$, where $p_N = o(k/n)$ as $N\to\infty$, be the quantile we want to estimate. In finance management, the high quantile is often referred to as value at risk, which is  the most prominent risk measure.
The detailed procedures of the distributed estimator for high quantile $x\suit{p_N}$ are given as follows:
\begin{itemize}
    \item On each machine $j$, we calculate $\widehat{\gamma}_{PWM}^{(j)},\widehat{a}^{(j)}\suit{\frac{n}{k}}, X_{n-k,n}^{(j)}$ and transmit these values to the central machine.
    \item On the central machine, we take the average of the $\widehat{\gamma}_{PWM}^{(j)},\widehat{a}^{(j)}\suit{\frac{n}{k}}, X_{n-k,n}^{(j)}$ statistics collected from the $m$ machines to obtain $\widehat{\gamma}_{PWM}^{D},\widehat{a}^{D}\suit{\frac{n}{k}}, \widehat{b}^{D}\suit{\frac{n}{k}} $.
    \item On the central machine,  we estimate $x\suit{p_N}$ with $p_N\to 0$ by 
    \begin{equation}\label{Eq: DC high Quantile}
      \widehat{x}^{D}(p_N) = \widehat{b}^{D}\suit{\frac{n}{k}}+\widehat{a}^{D}\suit{\frac{n}{k}} \frac{\suit{\frac{k}{np_N}}^{\widehat{\gamma}_{PWM}^{D}}-1}{\widehat{\gamma}_{PWM}^{D}}.
    \end{equation}

\end{itemize}
The oracle high quantile estimator $\widehat{x}^{Oracle}(p_N)$ is defined in an analogous way as  $\widehat{x}^{D}(p_N)$, with replacing $\widehat{\gamma}_{PWM}^{D}$, $\widehat{a}^{D}\suit{\frac{n}{k}}$ and $\widehat{b}^{D}\suit{\frac{n}{k}}$  by  $\widehat{\gamma}_{PWM}^{Oracle}$, $\widehat{a}^{Oracle}\suit{\frac{n}{k}}$ and $\widehat{b}^{Oracle}\suit{\frac{n}{k}}$ in \eqref{Eq: DC high Quantile}, respectively.
Following the lines of the proof for the asymptotics  of  the oracle high quantile estimator, we can obtain the asymptotic normality of $\widehat{x}^{D}(p_N)$. Moreover, since $\widehat{\gamma}_{PWM}^{D}$, $\widehat{a}^{D}\suit{\frac{n}{k}}$ and $\widehat{b}^{D}\suit{\frac{n}{k}}$ possess the oracle property, $\widehat{x}^{D}(p_N)$ also  achieves the oracle property due to applying the  Cram\'er delta method. We present the result in the following corollary while omitting the proof.
\begin{corollary}
    Assume the same conditions  as in Corollary \ref{Corollary: PWM}. Suppose that $np_N = o(k)$ and $\log (Np_N) = o(\sqrt{km})$ as $N\to\infty$. 
    Then,  as $N\to\infty$,
    $$
\sqrt{km}\frac{\widehat{x}^{D}(p_N)-\widehat{x}^{Oracle}(p_N)}{a\suit{\frac{n}{k}}q_{\gamma}(d_N)}=o_P(1),
    $$
    where $d_N = k/(np_N)$ and for $t>1$,
    $
    q_{\gamma}(t):=\int_{1}^t s^{\gamma-1}\log s ds.
    $

\end{corollary}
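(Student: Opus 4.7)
The plan is to apply the Cram\'er delta method to the smooth three-variable map
$$h_N(\gamma,a,b):=b+a\,\frac{d_N^{\gamma}-1}{\gamma},\qquad d_N=\frac{k}{np_N},$$
since $\hat x^{DC}(p_N)=h_N(\hat\gamma_{PWM}^{DC},\hat a^{DC}(n/k),\hat b^{DC}(n/k))$ and likewise for $\hat x^{Oracle}(p_N)$. The three input coordinates already enjoy the oracle property at rate $\sqrt{km}$: Corollary \ref{Corollary: PWM} gives $\sqrt{km}(\hat\gamma_{PWM}^{DC}-\hat\gamma_{PWM}^{Oracle})=o_P(1)$, while the two displays just above the statement give $\sqrt{km}\{\hat a^{DC}(n/k)-\hat a^{Oracle}(n/k)\}/a(n/k)=o_P(1)$ and the analogous bound for $\hat b$. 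The proof therefore reduces to propagating these three oracle-scale bounds through $h_N$.

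First I would compute the partial derivatives. A short integration by parts gives $\partial_{\gamma}[(d^{\gamma}-1)/\gamma]=q_{\gamma}(d)$, so $\partial_b h_N=1$, $\partial_a h_N=(d_N^{\gamma}-1)/\gamma$, and $\partial_{\gamma}h_N=a\,q_{\gamma}(d_N)$. Since $d_N\to\infty$, the $\gamma$-direction dominates: for $\gamma\ge 0$ the ratio $(d_N^{\gamma}-1)/(\gamma q_{\gamma}(d_N))$ is $O(1/\log d_N)$, while for $\gamma<0$ both partials remain bounded and $q_{\gamma}(d_N)$ tends to a strictly positive finite constant. This identifies $a(n/k)\,q_{\gamma}(d_N)$ as the correct normalization and matches the classical asymptotic scale of the oracle high-quantile estimator in Chapter 4 of \cite{de2006extreme}.

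Next I would decompose the difference by telescoping in the three coordinates and invoke the mean value theorem in the $\gamma$-slot,
$$\begin{aligned}\hat x^{DC}(p_N)-\hat x^{Oracle}(p_N)&=(\hat b^{DC}-\hat b^{Oracle})+(\hat a^{DC}-\hat a^{Oracle})\,\frac{d_N^{\hat\gamma_{PWM}^{DC}}-1}{\hat\gamma_{PWM}^{DC}}\\ &\qquad+\hat a^{Oracle}(n/k)\,q_{\tilde\gamma}(d_N)\,(\hat\gamma_{PWM}^{DC}-\hat\gamma_{PWM}^{Oracle}),\end{aligned}$$
for some $\tilde\gamma$ between $\hat\gamma_{PWM}^{DC}$ and $\hat\gamma_{PWM}^{Oracle}$. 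Dividing by $a(n/k)\,q_{\gamma}(d_N)$ and multiplying by $\sqrt{km}$, each of the three terms is of the form $\sqrt{km}\cdot o_P((km)^{-1/2})$ times a ratio I need to show is $O_P(1)$: namely $1/q_{\gamma}(d_N)$ for the first, $[(d_N^{\hat\gamma_{PWM}^{DC}}-1)/\hat\gamma_{PWM}^{DC}]/q_{\gamma}(d_N)$ for the second, and $[\hat a^{Oracle}(n/k)/a(n/k)]\cdot[q_{\tilde\gamma}(d_N)/q_{\gamma}(d_N)]$ for the third.

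The main obstacle is controlling the ratios that involve the random exponents $\tilde\gamma$ and $\hat\gamma_{PWM}^{DC}$, which amounts to showing $d_N^{\tilde\gamma-\gamma}\overset{P}{\to}1$ even though $d_N$ diverges. Since $\tilde\gamma-\gamma=O_P((km)^{-1/2})$ by consistency of the oracle PWM estimator, this reduces to $\log d_N=o(\sqrt{km})$. The assumption $\log(Np_N)=o(\sqrt{km})$ enters precisely here: from $d_N=km/(Np_N)$ and the polynomial growth of $km$ in $N$, one has $\log d_N\le\log(km)+|\log(Np_N)|=o(\sqrt{km})$, hence $(\tilde\gamma-\gamma)\log d_N=o_P(1)$ and $d_N^{\tilde\gamma-\gamma}=1+o_P(1)$. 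A parallel computation using the closed form $q_{\gamma}(d)=d^{\gamma}\log d/\gamma-(d^{\gamma}-1)/\gamma^{2}$ yields $q_{\tilde\gamma}(d_N)/q_{\gamma}(d_N)\overset{P}{\to}1$ and shows the second ratio is $O_P(1)$, while $\hat a^{Oracle}(n/k)/a(n/k)\overset{P}{\to}1$ by the consistency of the oracle scale estimator. Once these uniform controls are in place, the three $o_P(1)$ contributions sum to $o_P(1)$, which is the desired oracle property.
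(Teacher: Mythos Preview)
Your proposal is correct and takes essentially the same approach as the paper: the paper explicitly omits the proof, stating only that the result follows ``due to applying the Cram\'er delta method'' from the oracle properties of $\hat\gamma_{PWM}^{DC}$, $\hat a^{DC}(n/k)$ and $\hat b^{DC}(n/k)$ established just before the corollary. Your telescoping decomposition and the mean-value argument in the $\gamma$-coordinate are precisely the delta-method computation the paper alludes to, and your identification of the condition $\log(Np_N)=o(\sqrt{km})$ as the ingredient needed for $d_N^{\tilde\gamma-\gamma}\to 1$ in probability is the right place where that hypothesis enters.
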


\subsubsection{Endpoint}

Next, we consider the problem of estimating the endpoint of the distribution function $F$. Assume that $F\in D(G_{\gamma})$ for some  $\gamma<0$. In this case the endpoint $x^* = \sup\set{x: F(x)<1}$ is finite. The endpoint can be treated as a specific case of quantile by regarding  $p_N$ as  $0$. The distributed endpoint estimator can be defined as 
$$
\widehat{x}^{*,D} =\widehat{b}^{D}\suit{\frac{n}{k}} - \frac{\widehat{a}^{D}\suit{\frac{n}{k}}}{\widehat{\gamma}_{PWM}^{D}}.
$$
The definition of the oracle endpoint estimator $\widehat{x}^{*,Oracle}$ is in an analogous way.
Again, the distributed endpoint estimator achieves the oracle property  as in the following corollary.
\begin{corollary}\label{Corrollary: endpoint}
  Assume the same conditions  as in Corollary \ref{Corollary: PWM} and $\gamma<0$.   Then, as $N\to\infty$,
  $$
\sqrt{km}\frac{\widehat{x}^{*,D}-\widehat{x}^{*,Oracle}}{a\suit{\frac{n}{k}}} =o_P(1).
  $$
\end{corollary}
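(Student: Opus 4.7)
The plan is to express the endpoint estimator as a smooth function of the scale estimator, the location estimator, and the PWM index estimator, and then combine the three oracle properties already available for those building blocks. A Cram\'er-style delta expansion applies because the map $(b,a,\gamma)\mapsto b-a/\gamma$ is smooth away from $\gamma=0$, and the standing assumption $\gamma<0$ keeps us uniformly bounded away from that singularity.

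The first step is the decomposition
\[
\frac{\hat{x}^{*,DC}-\hat{x}^{*,Oracle}}{a\suit{n/k}}
= \frac{\hat{b}^{DC}\suit{n/k}-\hat{b}^{Oracle}\suit{n/k}}{a\suit{n/k}}
+\msuit{\frac{\hat{a}^{Oracle}\suit{n/k}/a\suit{n/k}}{\hat{\gamma}_{PWM}^{Oracle}}
-\frac{\hat{a}^{DC}\suit{n/k}/a\suit{n/k}}{\hat{\gamma}_{PWM}^{DC}}}.
\]
The first summand is $o_P\suit{(km)^{-1/2}}$ by the oracle property of the distributed location estimator recorded in Section 4.3. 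For the bracketed difference, the second step is to invoke the elementary identity
\[
\frac{\alpha_1}{\beta_1}-\frac{\alpha_2}{\beta_2}
=\frac{(\alpha_1-\alpha_2)\beta_2+\alpha_2(\beta_2-\beta_1)}{\beta_1\beta_2},
\]
applied with $\alpha_1=\hat{a}^{Oracle}\suit{n/k}/a\suit{n/k}$, $\alpha_2=\hat{a}^{DC}\suit{n/k}/a\suit{n/k}$, $\beta_1=\hat{\gamma}_{PWM}^{Oracle}$, and $\beta_2=\hat{\gamma}_{PWM}^{DC}$. The scale oracle property from Section 4.3 gives $\sqrt{km}(\alpha_1-\alpha_2)=o_P(1)$ and Corollary~\ref{Corollary: PWM} gives $\sqrt{km}(\beta_2-\beta_1)=o_P(1)$; consistency of the oracle PWM and scale estimators gives $\alpha_2\to 1$ and $\beta_2\to\gamma$ in probability, so both are $O_P(1)$.

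The one mildly delicate point, which I would flag as the main obstacle, is controlling the random denominator $\beta_1\beta_2$: because $\gamma\mapsto 1/\gamma$ is singular at $0$, a naive delta-method argument would break down there. Here the standing assumption $\gamma<0$ yields $\beta_1\beta_2\to\gamma^2>0$ in probability, so the denominator stays bounded away from zero on an event of probability tending to one. With that in place, the numerator of the identity above is $o_P((km)^{-1/2})$ multiplied by an $O_P(1)$ factor, whence the bracketed difference is $o_P((km)^{-1/2})$, and combining the two summands gives the claimed $\sqrt{km}\suit{\hat{x}^{*,DC}-\hat{x}^{*,Oracle}}/a\suit{n/k}=o_P(1)$.
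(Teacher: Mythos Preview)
Your argument is correct and matches the paper's intended approach. The paper does not give a detailed proof of this corollary; it treats the endpoint as the limiting case of the high quantile and indicates that the result follows by applying the Cram\'er delta method to the already-established oracle properties of $\hat{\gamma}_{PWM}^{DC}$, $\hat{a}^{DC}(n/k)$ and $\hat{b}^{DC}(n/k)$, which is exactly what your explicit algebraic decomposition carries out.
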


\subsubsection{Tail probability}
Lastly, we consider the dual problem of estimating the high quantile: given a large value of $x_N$, how to  estimate $p(x_N) = 1-F(x_N)$ under the distributed inference setup. The detailed procedures for estimating the tail probability are similar to that for  estimating the high quantile, except that on the central machine, we 
estimate  the tail probability $p(x_N)$ by 
$$
\widehat{p}^{D}(x_N) = \frac{k}{n}\set{\max\suit{0,1+\widehat{\gamma}_{PWM}^{D}\frac{x_N-\widehat{b}^{D}\suit{\frac{n}{k}}}{\widehat{a}^{D}\suit{\frac{n}{k}}}}}^{-1/\widehat{\gamma}_{PWM}^{D}}.
$$
The definition of the oracle tail probability estimator $\widehat{p}^{Oracle}(x_N)$ is  in an analogous way.
 Note that $\widehat{p}^{Oracle}(x_N)$  is valid only for $\gamma>-1/2$ (cf. Remark 4.4.3 in \cite{de2006extreme}). The oracle property of $\widehat{p}^{D}(x_N)$ is established in the following corollary.
\begin{corollary}\label{Corrolary: tail probability}
  Assume the same conditions as  in Corollary \ref{Corollary: PWM} and  $\gamma \in (-1/2, 1/2)$.  Denote  $d_N = \frac{k}{np(x_N)}$ and $ w_{\gamma}(t) = t^{-\gamma}\int_{1}^{t} s^{\gamma-1}\log s ds$ for $t>0$. 
  Suppose that $d_N \to \infty$ and  $w_{\gamma}(d_N) = o(\sqrt{km})$ as $N\to\infty$, then 
  $$
  \frac{\sqrt{km}}{w_{\gamma}(d_N)} \frac{\widehat{p}^{D}(x_N)-\widehat{p}^{Oracle}(x_N)}{p(x_N)} =o_P(1).
  $$
\end{corollary}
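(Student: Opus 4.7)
The plan is to deduce this corollary from the oracle properties already established in the preceding subsection for $\hat{\gamma}_{PWM}^{DC}$, $\hat{a}^{DC}(n/k)$ and $\hat{b}^{DC}(n/k)$, combined with Cram\'er's delta method. Define the map
$$
\phi(\gamma,a,b) := \frac{k}{n}\left[1+\gamma\,\frac{x_N-b}{a}\right]^{-1/\gamma},
$$
so that $\hat{p}^{DC}(x_N)=\phi(\hat{\gamma}_{PWM}^{DC},\hat{a}^{DC}(n/k),\hat{b}^{DC}(n/k))$ and $\hat{p}^{Oracle}(x_N)=\phi(\hat{\gamma}_{PWM}^{Oracle},\hat{a}^{Oracle}(n/k),\hat{b}^{Oracle}(n/k))$ with probability tending to one; the $\max(0,\cdot)$ truncation can be dropped because the consistency of the oracle triple guarantees that the bracketed argument is strictly positive with probability approaching one.

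First, I would Taylor-expand $\log\phi$ around the oracle triple to the first order and read off the leading linear coefficients by evaluating the partial derivatives at the true parameters $(\gamma,a_0(n/k),b_0(n/k))$. Using the second-order condition to replace $(x_N-b_0(n/k))/a_0(n/k)$ by $(d_N^\gamma-1)/\gamma$ up to a negligible term, a direct differentiation yields
$$
\partial_\gamma\log\phi = \frac{\log d_N}{\gamma}-\frac{1-d_N^{-\gamma}}{\gamma^2},\qquad a_0\partial_a\log\phi = \frac{1-d_N^{-\gamma}}{\gamma},\qquad a_0\partial_b\log\phi = d_N^{-\gamma}.
$$
An integration by parts on the defining integral of $w_\gamma$ shows that the first expression equals $w_\gamma(d_N)$ exactly, while the other two are $O(|w_\gamma(d_N)|)$ throughout $\gamma\in(-1/2,1/2)$: the ratio $|1-d_N^{-\gamma}|/|\gamma|$ is bounded when $\gamma>0$, it is comparable to $|\gamma|\cdot|w_\gamma(d_N)|$ when $\gamma<0$, and the case $\gamma=0$ follows by taking the continuous-parameter limit.

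Second, substituting the three oracle properties --- $\sqrt{km}(\hat{\gamma}_{PWM}^{DC}-\hat{\gamma}_{PWM}^{Oracle})=o_P(1)$ and the analogous rates for $(\hat{a}^{DC}-\hat{a}^{Oracle})/a(n/k)$ and $(\hat{b}^{DC}-\hat{b}^{Oracle})/a(n/k)$ already proved at the start of this subsection --- into the first-order Taylor expansion produces
$$
\log\hat{p}^{DC}(x_N)-\log\hat{p}^{Oracle}(x_N) = O_P\bigl(w_\gamma(d_N)\bigr)\cdot o_P\bigl((km)^{-1/2}\bigr),
$$
which, upon exponentiating and dividing through by $p(x_N)$, is precisely the claimed $o_P(w_\gamma(d_N)/\sqrt{km})$ bound.

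The main technical obstacle will be controlling the second-order remainder of the Taylor expansion. The second partial derivatives of $\log\phi$ pick up additional factors of order $(\log d_N)^2$ (or $d_N^{-2\gamma}$ when $\gamma<0$), so the quadratic remainder in the expansion is of order $w_\gamma(d_N)^2/(km)$; for this to be negligible relative to the target rate $w_\gamma(d_N)/\sqrt{km}$ one needs $w_\gamma(d_N)=o(\sqrt{km})$, which is exactly the hypothesis of the corollary. The argument then runs entirely in parallel with the standard delta-method derivation of the asymptotic normality of the oracle tail-probability estimator (see Chapter 4 of \cite{de2006extreme}), applied here to the difference of the two estimators rather than to each estimator's deviation from $p(x_N)$, and requires no tools beyond those already developed in Sections 2 and 3.
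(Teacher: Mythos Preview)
Your proposal is correct and matches the paper's intended approach: the paper omits an explicit proof of this corollary, stating only that the oracle property follows from the already-established oracle properties of $\hat{\gamma}_{PWM}^{DC}$, $\hat{a}^{DC}(n/k)$ and $\hat{b}^{DC}(n/k)$ via Cram\'er's delta method, in parallel with the corresponding argument in Chapter~4 of \cite{de2006extreme}. Your computation of the partial derivatives of $\log\phi$, the identification of $\partial_\gamma\log\phi$ with $w_\gamma(d_N)$ via integration by parts, and the use of the growth condition $w_\gamma(d_N)=o(\sqrt{km})$ to control the quadratic remainder are exactly the ingredients that a full proof would contain.
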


\section{Heterogeneous subsample sizes}\label{sec:heter}

In this section, we  extend our results to the case of heterogeneous subsample sizes.  We assume that  the $N$ observations are  distributedly stored in  $m$ machines  with $n_j$ observations in machine $j$, $j=1,\dots,m$, i.e. $N=\sum_{j=1}^m n_j$. Moreover,  we assume that all $n_j$, $j=1,\dots,m$ diverge in the same order. Mathematically,  that is, 
\begin{equation}\label{Eq heter sizes}
  c_1\le \min_{1\le j \le m} n_j m/N \le \max_{1\le j \le m} n_j m/N \le c_2  
\end{equation}
 for some positive constants $c_1$ and $c_2$ and all $N\ge 1$.

The tail empirical process based on the observations in machine $j$ is now defined as 
$$
Y_{n_j,k_j}^{(j)}(x) = \frac{n_j}{k_j}\bar{F}_{n_j}^{(j)}\set{a_0\suit{\frac{n_j}{k_j}}x+b_0\suit{\frac{n_j}{k_j}}}, \quad j=1,\dots,m,
$$
where $\bar{F}_{n_j}^{(j)}:=1-F_{n_j}^{(j)}$ and $F_{n_j}^{(j)}$ denotes the  empirical distribution function
based on the observations in machine $j$.  


We choose $k_j, j=1,\dots,m$ such that the ratios  $k_j/n_j$ are homogenous across all the $m$ machines, i.e., 
\begin{equation}\label{Eq Hetergenous k}
k_1/n_1  = \cdots = k_m/n_m.
\end{equation}
Denote $K = \sum_{j=1}^m k_j$, clearly $k_j/n_j = K/N, j=1,\dots,m$. Then, we have that, 
$$
Y_{N,K}(x) = \sum_{j=1}^m \frac{n_j}{N} Y_{n_j,k_j}^{(j)}(x).
$$
In other words, the oracle tail empirical process is a weighted average of the  tail empirical processes based on the distributed
subsamples, where the weights equal to the fraction of the observations on each machine.

Following similar steps as in proving Theorem \ref{Theorem: EP Divide and Conquer}, we obtain the following result.
\begin{theorem}
  Assume the same conditions as in Theorem \ref{Theorem: EP Divide and Conquer} and conditions \eqref{Eq heter sizes} and  \eqref{Eq Hetergenous k}.
  Then under proper Skorokhod construction,  there exist $m$ independent sequences of Brownian motions $\set{W_{n_j}^{(j)}}, j=1,\dots,m$, such that for any $v\in ((2+\eta)^{-1},1/2)$, as $N\to\infty$,
  $$
  \begin{aligned}
    &\max_{1\le j\le m} \sup_{x\in \mathbb{D}} \set{z(x)}^{v-1/2} \Big|\sqrt{k_jm}\set{Y_{n_j,k_j}^{(j)}(x)-z(x)} \\
    &-\sqrt{m}W_{n_j}^{(j)}\set{z(x)}-\sqrt{k_jm}A_0(N/K)\set{z(x)}^{1+\gamma}\Psi\set{1/z(x)}\Big|  = o_P(1).
  \end{aligned}
  $$
  Moreover,  as $N \to \infty$,
  $$
  \begin{aligned}
    \sup_{x\in \mathbb{D}} &\set{z(x)}^{v-1/2}\Big|\sqrt{K}\set{Y_{N,K}(x)-z(x)} \\
    &-W_N\set{z(x)}-\sqrt{K}A_0(N/K)\set{z(x)}^{1+\gamma}\Psi\set{1/z(x)}\Big| =o_P(1),
  \end{aligned} 
  $$
  where  $W_N = \sum_{j=1}^m \sqrt{\frac{n_j}{N}} W_{n_j}^{(j)}$ is also a  Brownian motion.
\end{theorem}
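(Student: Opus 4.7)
The plan follows the two-stage strategy used in Theorem \ref{Theorem: EP Divide and Conquer}. First I would establish the uniform-in-$j$ approximation by invoking the heterogeneous-size analogue of Proposition \ref{Theorem: EP single machine} machine-by-machine, and then derive the oracle statement by exploiting the weighted decomposition $Y_{N,K}(x) = \sum_{j=1}^m (n_j/N)\, Y_{n_j,k_j}^{(j)}(x)$ implied by condition \eqref{Eq heterogeneous k}, aggregating the $m$ per-machine expansions with carefully chosen weights.

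For the first step, I apply Proposition \ref{Theorem: EP single machine} on each machine. The two-sided bound \eqref{Eq heter sizes} gives $n_j \asymp N/m$, and \eqref{Eq heterogeneous k} yields $k_j = (K/N)\, n_j$, so $k_j \asymp K/m$; every pair $(n_j,k_j)$ then satisfies the analogues of (C1)--(C3) with the common ratio $n_j/k_j = N/K$. I then pick a deterministic rate $t = t(N) \to 0$ slowly enough that conditions \eqref{Eq: r 1}--\eqref{Eq: r 3} hold uniformly in $j$ while $m\, r^{-1/(1/2-v)} \to 0$; such a choice exists for $v > 1/(2+\eta)$ by condition (C2), exactly as in the proof of Theorem \ref{Theorem: EP Divide and Conquer}. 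A union bound over $j=1,\dots,m$ then yields the first assertion.

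For the second step, I multiply each per-machine expansion by the weight $\sqrt{n_j/(Nm)}$ and sum over $j$. The key algebraic identity, obtained from $k_j m = K n_j/N$, is
$$\frac{\sqrt{K}}{\sqrt{k_j m}} \cdot \frac{n_j}{N} = \sqrt{\frac{n_j}{Nm}}.$$
Because $n_j/k_j = N/K$ is common to all $j$, the bias coefficients align as $A_0(n_j/k_j) = A_0(N/K)$, and using $k_j = (K/N) n_j$ gives
$$\sum_{j=1}^m \sqrt{\frac{n_j}{Nm}}\, \sqrt{k_j m}\, A_0(N/K) = \frac{A_0(N/K)\sqrt{K}}{N} \sum_{j=1}^m n_j = \sqrt{K}\, A_0(N/K),$$
reproducing the target bias term. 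The Gaussian contribution aggregates to $\sum_{j=1}^m \sqrt{n_j/N}\, W_{n_j}^{(j)}\{z(x)\} = W_N\{z(x)\}$, and the weighted sum of residuals is dominated, uniformly in $x \in \mathbb{D}$ after multiplication by $\{z(x)\}^{v-1/2}$, by $\max_j \delta_{n_j}^{(j)} = o_P(1)$ from the first step.

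Finally, since the $W_{n_j}^{(j)}$ are independent standard Brownian motions and $\sum_{j=1}^m n_j/N = 1$, the linear combination $W_N = \sum_{j=1}^m \sqrt{n_j/N}\, W_{n_j}^{(j)}$ is itself a standard Brownian motion. I expect the main obstacle to lie in the first step: verifying that a single decay rate $t$ simultaneously controls the approximation error on every machine is delicate when the machine sizes differ, but the order-comparability guaranteed by \eqref{Eq heter sizes} reduces the argument to the homogeneous setup already handled in Theorem \ref{Theorem: EP Divide and Conquer}.
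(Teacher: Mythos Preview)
Your proposal is correct and follows essentially the same route the paper indicates: the paper simply remarks that the result is obtained by ``following similar steps as in proving Theorem \ref{Theorem: EP Divide and Conquer}'', and your outline---applying Proposition \ref{Theorem: EP single machine} machine-by-machine with a single $t$ (made uniform via the order-comparability \eqref{Eq heter sizes}), then aggregating using the weighted decomposition $Y_{N,K}=\sum_j (n_j/N)Y_{n_j,k_j}^{(j)}$ and the identity $\sqrt{n_j/(Nm)}\sqrt{k_jm}=(n_j/N)\sqrt{K}$---is precisely that. The algebraic bookkeeping you give for the bias, Gaussian, and residual terms is correct, and the residual bound follows since $\sum_j\sqrt{n_j/(Nm)}\le 1$ by Cauchy--Schwarz.
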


Similar results hold for the tail quantile processes  as in Theorem \ref{Theorem: TQP DC}. Eventually, we can re-establish the oracle property of the distributed estimators as follows. 
Suppose the oracle estimator is based on $K$ top order statistics in the oracle sample. On each machine, we use the top $k_j  = (n_j/N) K$ order statistics in the estimation. 
 By taking  a weighted average of the estimates from all machines using the weights $n_j/N, j=1,\dots,m$, to obtain the distributed estimator,  the oracle property holds under the same conditions as in  the homogenous case  with similar proofs.

 \section{Real Data Application}\label{sec:realdata}

 We use a  dataset containing car insurance claims in five states of United States:  Iowa ($n_1 = 2601$), Kansas ($n_2 = 798$), Missouri ($n_3 = 3150$), Nebraska ($n_4 = 1703$), and Oklahoma ($n_5 = 882$). 
The total sample size is $N = 9134$. We work under a hypothesis scenario: each state cannot share its own data to others, but they are willing to share their statistical results. Then one can apply a DC algorithm for conducting extreme value statistics.
 Our target is to estimate the common extreme value index of the total claim amount. We consider the MLE instead of  the Hill estimator considered by  \cite{chen2021distributed} since we do not assume heavy tail at the first place.

Let $K = \sum_{j=1}^5 k_j$ be the total number of exceedances used by the  five states. 
As suggested in Section \ref{sec:heter}, we choose $k_j$ as 
$
k_j = [K \frac{n_j}{N}],
$
and apply the MLE for each of the five states to obtain $\widehat{\gamma}_{mle}^{(j)}$, $ j=1,2,\dots, 5$. Then, we combine these five estimates to obtain the distributed  MLE by 
$$
\widehat{\gamma}_{mle}^{D} = \sum_{j=1}^5 \frac{n_j}{N} \widehat{\gamma}_{mle}^{(j)}.
$$

\begin{figure}
  \centering
  \includegraphics[width = 0.8\textwidth]{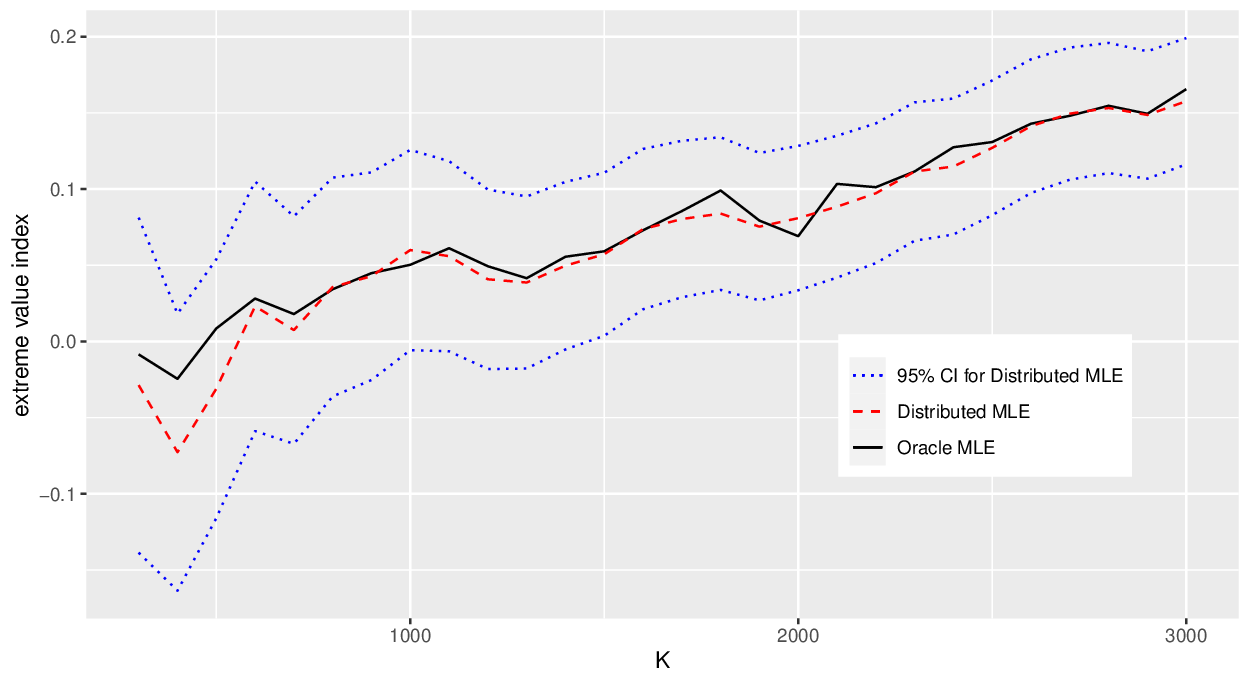}
  \caption{Car insurance data.}
  \label{Fig Real data}
\end{figure}

The distributed MLE is plotted against different values of $K$ in Figure \ref{Fig Real data}, along with its 95\% confidence interval.  We also plot the oracle MLE  in this figure.   The distributed MLE is close to the oracle MLE for almost all levels of $K$ and the oracle  MLE  always falls into the 95\% confidence interval constructed based on  the distributed MLE. 

By choosing $K=1000$, we obtain that the distributed MLE for the extreme value index is about 0.05. And we cannot reject the hypothesis that the extreme value index is $0$ under the 5\% significance level for this choice of $K$. This result shows that the insurance claims may not be heavy tailed. In turn, the distributed Hill estimator adopted in \cite{chen2021distributed} may not be  suitable for this application.

\section{Discussion}\label{sec:discuss}

In this paper, we investigate the problem of distributed inference in extreme value analysis when the oracle sample $\set{X_1,X_2,\dots,X_N}$ are i.i.d.. In fact, the assumption that all the data are drawn from the same distribution can be relaxed. In real applications, observations from different machines may follow different distributions, but nevertheless share some common properties such as the extreme value index. 

We assume that  all observations are independent, but only observations on the same machine follow the same distribution. 
Denote the common distribution function of the observations in machine $j$ as $F_{n,j}, j=1,\dots,m$. We assume that, there exists a continuous function $F$ which satisfies the second order condition \eqref{Def:SOC:Uni} with $\gamma>0$. 
In addition, assume that the series of constants $\set{c_{n,j}}_{1\le j \le m}$ satisfies that $0< \underline{c} \le c_{n,j}\le \bar{c}<\infty$ for all $1\le j \le m$ and $n\in \mathbb{N}$, and $A_1(t)$ is  
 a positive regularly varying function  with index $\tilde{\rho}< 0$   such that as $t\to\infty$,
$$
\sup_{m\in \mathbb{N}}\max_{1\le j\le m}\abs{\frac{1-F_{n,j}(t)}{1-F(t)} -c_{n,j}}  = O(A_1(t)).
$$
By restricting  that $\sqrt{km} A_1(n/k) \to 0$, 
\cite{chen2021distributed} gives a theoretical proof for the asymptotic theories of the distributed Hill estimator.  Following similar steps,  we can also handle  tail empirical processes and tail quantile processes. The details are omitted.


\bigskip
\begin{center}
{\large\bf Supplementary Material}
\end{center}

\begin{description}

\item[Appendix.pdf:]  The Supplementary Material contains all the technical proofs and simulation studies.

\end{description}

\bibliographystyle{apalike} 
\bibliography{mybib} 

\end{document}